\documentclass{article}

\usepackage{amsmath, amsfonts, amssymb}
\usepackage{booktabs}
\usepackage{enumerate}
\usepackage{hyperref}
\usepackage{multirow}
\usepackage{bm}
\usepackage{amsthm}
\usepackage[square, numbers, sort&compress]{natbib}

\newcommand{\dif}{\ensuremath{\,\mathrm{d}}}
\DeclareMathOperator{\E}{\mathbb{E}}
\newcommand{\indicator}{\ensuremath{\mathbb{I}}}
\newtheorem{theorem}{Theorem}

\begin{document}

\title{Variable Selection for Infectious Disease Transmission via Penalized Pairwise Accelerated Failure Time Models}

\author{Jeongjin Lee$^{1}$, Patrick M.\ Schnell$^{1}$, Grzegorz A.\ Rempala$^{1}$,\\
and Eben Kenah$^{1,*}$}

\date{\small
$^{1}$ Biostatistics Division, College of Public Health, The Ohio State University\\
$^{*}$ Correspondence: \texttt{kenah.1@osu.edu}
}

\maketitle
\sloppy

\begin{abstract}
In infectious disease transmission, regression and variable selection for time-to-event outcomes are complicated by transmission-induced dependence that violates standard independence assumptions.
Pairwise survival analysis addresses this dependence by modeling \emph{contact intervals} in ordered pairs, defined as the time from the onset of infectiousness in the pair's first individual to their first infectious contact with the second individual.
We propose a penalized pairwise accelerated failure time model for contact intervals that accommodates simultaneous covariate effects on infectiousness and susceptibility while jointly representing within-population transmission and infection from external sources.
For variable selection with many candidate covariates, we develop adaptive LASSO penalized maximum likelihood estimation with a coordinate descent algorithm, treating distributional baseline parameters as unpenalized nuisance parameters.
We establish selection consistency and asymptotic normality of the adaptive LASSO estimator under profile-likelihood regularity conditions in a local asymptotic regime in which information accumulates over all ordered pairs that are ever at risk of transmission during follow-up.
Simulation studies show accurate variable selection and improved estimation accuracy compared with the unpenalized estimator and the LASSO.
We illustrate the method using 2009 household influenza A (H1N1) surveillance data from Los Angeles County, where the adaptive LASSO produces a sparse, interpretable model and improves the precision of selected covariate effect estimates, including the effect of antiviral prophylaxis on susceptibility.

\smallskip
\noindent \textbf{Keywords:} infectious disease transmission, penalized likelihood, variable selection, pairwise accelerated failure time model, coordinate descent, survival analysis
\end{abstract}

\section{Introduction}
A central goal in infectious disease epidemiology is to quantify how individual characteristics and interventions, such as age or vaccination status, influence transmission risk through their effects on infectiousness and susceptibility.
Household and other close-contact studies are especially informative because they directly observe clusters of intense exposure within a clearly-defined group of individuals at risk of infection and pairs at risk of transmission \citep{frost1938familial, fox1974family, cauchemez2009household, goeyvaerts2018household}.
Such studies often measure a large number of candidate covariates, so variable selection is essential for deriving parsimonious, interpretable models and identifying covariates most strongly associated with infectiousness and susceptibility.
Importantly, variable selection in this setting must account for transmission-driven dependence, rather than relying on independence assumptions that are violated in close-contact groups.
Infection outcomes are often analyzed using standard regression methods, for example treating secondary infection as a binary endpoint or modeling the time to infection in individuals.
However, infections within a close-contact group are not independent because one person’s infection changes the hazard of infection faced by others.
Leaving this dependence out of the analytical model can complicate interpretation and undermine both point estimation and uncertainty quantification, even when variance estimates are adjusted for clustering \citep{longini1988statistical, halloran1991study, sharker2021estimating}.
Infections acquired from outside the observed population can be nonnegligible in household studies, so failing to account for external infection can bias covariate effect estimates and reduce precision \citep{longini1982estimating, sharker2024pairwise}.
Together, these features motivate transmission models that represent within-group infection dynamics and external infection, coupled with principled variable selection methods to identify covariates associated with infectiousness and susceptibility without unrealistic independence assumptions.

Pairwise survival analysis offers a principled approach to transmission regression by modeling \emph{contact intervals} in ordered pairs, where the contact interval in the pair $ij$ is defined as the time from the onset of infectiousness in $i$ to the first infectious contact from $i$ to $j$.
The hazard function of the contact interval distribution represents instantaneous infectiousness, and its survival function can be used to calculate transmission probabilities (i.e., secondary attack risks).
Pairwise likelihoods can accommodate transmission-induced dependence while retaining a regression structure for covariate effects on infectiousness and susceptibility as well as pair- or group-level covariate effects \citep{kenah2011contact, kenah2013non, kenah2015semiparametric, sharker2024pairwise}.
\citet{kenah2013non} developed nonparametric contact-interval methods that combine a Nelson--Aalen-type estimator with an EM algorithm to estimate the cumulative hazard of infectious contact when who-infects-whom is unobserved.
Building on this framework, \citet{kenah2015semiparametric} developed semiparametric relative-risk regression for transmission within the observed population.
These methods did not represent infection from external sources as a separate hazard component.

Because infections acquired outside the observed population may otherwise be incorrectly attributed to within-population transmission, ignoring external infection is a fundamental limitation that can bias inference and limit the practical utility of pairwise survival analysis.
To address this, \citet{sharker2024pairwise} proposed pairwise accelerated failure time (AFT) regression models for contact intervals that explicitly incorporate external infection alongside within-population transmission.
Despite these advances, most existing work emphasizes estimation in relatively low-dimensional covariate settings, and variable selection procedures for pairwise survival analysis that are both computationally practical and theoretically grounded do not exist.

In this paper, we propose a penalized pairwise accelerated failure time model for contact intervals that jointly represents within-population transmission and infection from external sources.
Variable selection is performed by maximizing the adaptive LASSO (ALASSO) penalized likelihood \citep{zou2006adaptive}.
Estimation is carried out via coordinate descent, updating penalized regression coefficients using soft-thresholding while treating baseline and shape parameters as unpenalized nuisance parameters.
We establish selection consistency and asymptotic normality of the ALASSO estimator under a local asymptotic framework in which information accumulates through the ordered pairs that are ever at risk during follow-up.
Through simulation studies, we evaluate variable selection, estimation accuracy, and confidence interval calibration across several epidemiologic designs, including contact tracing with delayed entry and settings where who infected whom is unobserved.
Finally, we apply the method to household influenza A (H1N1) surveillance data from Los Angeles County in 2009, illustrating how ALASSO produces sparse, interpretable models and improves precision for selected covariate effects.

Section~\ref{sec:data} of this paper introduces the data structure and observation processes.
Section~\ref{sec:methods} presents the pairwise AFT likelihood for internal and external contact intervals and the adaptive LASSO estimation.
Section~\ref{sec:asymptotic} establishes asymptotic properties, and Section~\ref{sec:var_tune} describes variance approximation and tuning parameter selection.
Section~\ref{sec:simulation} reports simulation results, and Section~\ref{sec:application} presents the influenza application.

% ===============================
% Data
% ===============================
\section{Infectious Disease Data}
\label{sec:data}

\subsection{Stochastic S(E)IR Models}
The study population consists of $n$ individuals observed from time $0$ until a stopping time $T$.
For each individual $i \in \{1,\ldots,n\}$, we observe the infection time $t_i$, the onset of infectiousness at $t_i+\varepsilon_i$, and the end of infectiousness at $t_i+\varepsilon_i+\iota_i$, where $\varepsilon_i$ and $\iota_i$ denote the latent and infectious periods, respectively.
Each individual follows a stochastic SEIR-type process in which both durations have finite means.
At any time $t$, individual $i$ occupies one of four states: susceptible (S), exposed (E), infectious (I), or removed (R).
Transitions between these states determine the infection and recovery events used in the analysis.
The SEIR model simplifies to an SIR model when individuals have no latent period, so $\varepsilon_i = 0$.
We assume the latent period and infectious periods have finite mean and variance at all combinations of covariate values or that we have an S(E)I model where infected individuals do not recover.

An internal infection occurs when an individual is infected by another member of the study population.
For each ordered pair $(i,j)$, $C_{ij}$ indicates whether infectious contact from $i$ to $j$ would be possible if $i$ were infectious, where an infectious contact is defined to be sufficient to infect the recipient if they are susceptible.
$C_{ij}$ may represent an entry in the adjacency matrix of a (possibly directed or semidirected) contact network and need not be symmetric.

After becoming infectious at time $t_i + \varepsilon_i$, individual $i$ may produce an infectious contact directed toward another individual $j \ne i$.
If such a contact occurs, the time of the first infectious contact is
\[
t_{ij} = t_i + \varepsilon_i + \tau_{ij}^*,
\]
where $\tau_{ij}^*$ is the infectious contact interval.
The random variable $\tau_{ij}^*$ is strictly positive, and we set $\tau_{ij}^* = \infty$ when no infectious contact takes place.
Because infectious contact can occur only while $i$ is infectious, the interval must satisfy
\[
\tau_{ij}^* \in (0, \iota_i] \quad\text{or}\quad \tau_{ij}^* = \infty.
\]
The infectious contact interval $\tau_{ij}^*$ is constructed by drawing a contact interval $\tau_{ij}$ from a failure time distribution with hazard $h_{ij}(\tau)$.
If $\tau_{ij}\le \iota_i$ and $C_{ij}=1$, then $\tau_{ij}^*=\tau_{ij}$; otherwise $\tau_{ij}^*=\infty$.
The function $h_{ij}(\tau)$ describes the instantaneous infectiousness of $i$ at infectious age $\tau$, with cumulative hazard $H_{ij}(\tau)=\int_0^{\tau} h_{ij}(u)du$ and survival $S_{ij}(\tau)=\exp\{-H_{ij}(\tau)\}$.
The secondary attack risk (SAR), defined as the probability of infectious contact with $j$ during the entire infectious period of $i$, is $1-S_{ij}(\iota_i)$.

An external infection occurs when an individual is infected from outside the observed population.
Let $C_{0j}$ indicate whether $j$ is at risk of external infectious contact.
The external infectious contact time $\tau_{0j}^*$ is the first time $j$ receives infectious contact from an external source, with $\tau_{0j}^*=\infty$ if no such contact occurs.
We generate $\tau_{0j}^*$ by drawing an external contact interval $\tau_{0j}$ from a distribution with hazard $h_{0j}(t)$.
If $C_{0j}=1$, then $\tau_{0j}^*=\tau_{0j}$; otherwise $\tau_{0j}^*=\infty$.
For simplicity, we assume the external source is always infectious, with onset at time $0$ and an infinite infectious period, though this assumption may be relaxed by specifying its own onset and infectious duration.

For each individual $j$, let $\nu_j$ index the source of infection for $j$, taking $\nu_j=0$ for infection from outside the observed group and $\nu_j=\infty$ if $j$ is never infected.
When $\nu_j$ is known for all infected individuals, we say that \emph{who infected whom is observed}.
This may occur in a small, intensively monitored outbreak with near-complete case ascertainment and whole-genome viral sequencing that uniquely resolves transmission links.
Otherwise, we say that \emph{who infected whom is not observed}, even if $\nu_j$ is observed for only a subset of infected individuals.
The \emph{exposure set} of individual $j$ is the set of all sources of infection to whom $j$ was exposed while susceptible:
\[
\mathcal{W}_j
=
\bigl\{ i < \infty : (t_i+\varepsilon_i < t_j \ \text{or}\ i=0)\ \text{and}\ C_{ij}=1 \bigr\}.
\]
The \emph{infectious set} of individual $j$ is the set of all possible infectors $\nu_j$, which we denote $\mathcal{V}_j$. 
Given $C_{ij}$ and the transfer times between the states $(S,E,I,R)$, we have
\[
\mathcal{V}_j
\subseteq
\bigl\{ i < \infty : (t_i+\varepsilon_i < t_j \le t_i+\varepsilon_i+\iota_i \ \text{or}\ i=0)\ \text{and}\ C_{ij}=1 \bigr\}.
\]
If the infector of $j$ is known, then $\mathcal{V}_j=\{\nu_j\}$.
If $j$ remains uninfected, then $\mathcal{V}_j=\emptyset$.

\subsection{Observation}
For each susceptible individual $j$, the observed data include the infection outcome and the period during which infectious contact from any source could occur.
A contact interval $\tau_{ij}$ can be observed only if $j$ is infected by $i$, which requires $C_{ij}=1$ and that transmission from $i$ to $j$ is possible at the the infection time of $j$.
Contact intervals may be right-censored by the end of infectiousness in $i$, infection of $j$ from another source, or the end of follow-up.
For $i\neq 0$, let
\[
\mathcal{I}_i(t)
= \indicator\{ t_i+\varepsilon_i < t \le t_i+\varepsilon_i+\iota_i \},
\]
indicate whether $i$ is infectious at time $t$, and let $\mathcal{I}_0(t)$ denote whether external infectious contact is possible.
Let
\[
\mathcal{S}_j(t)=\indicator\{ t \le t_j \},
\qquad
\mathcal{O}(t)=\indicator\{ t \le T \}
\]
denote susceptibility of $j$ and whether observation is ongoing.
Because $\mathcal{I}_i(t)$, $\mathcal{S}_j(t)$, and $\mathcal{O}(t)$ are left-continuous, the at-risk process
\[
Y_{ij}(t)= C_{ij}\,\mathcal{I}_i(t)\,\mathcal{S}_j(t)\,\mathcal{O}(t)
\]
is also left-continuous and equals one whenever the pair $(i,j)$ is at risk of an observed infectious contact at time $t$.

% ===============================
% Methods
% ===============================
\section{Methods}
\label{sec:methods}

\subsection{Contact intervals and rate parameters}

Let $\lambda_{ij}$ denote the rate parameter of the contact interval distribution for pair $(i,j)$.
For internal pairs $(i,j)$ with $i\neq0$,
\begin{equation}
    \lambda_{ij}
    = \exp\bigl(\bm{x}_{ij}^\top \bm{\beta}\bigr)\,\lambda_0,
    \label{eq:internal}
\end{equation}
where $\lambda_0>0$ is the baseline internal transmission rate, and $\bm{x}_{ij}\in\mathbb{R}^p$ is the vector of pairwise covariates characterizing the infectiousness of individual $i$ and the susceptibility of individual $j$.
For external infections $(i=0)$,
\begin{equation}
    \lambda_{0j}
    = \exp\bigl(\bm{x}_{0j}^\top \bm{\beta}\bigr)\,\mu_0,
    \label{eq:external}
\end{equation}
where $\mu_0>0$ is the baseline external rate parameter and $\bm{x}_{0j}$ may include susceptibility or environmental covariates specific to $j$.
Both cases can be represented jointly as
\begin{equation}
    \lambda_{ij}
    =
    \exp\bigl(\bm{x}_{ij}^\top \bm{\beta}\bigr)
    \lambda_0^{1-\indicator \{i=0\}}
    \mu_0^{\indicator \{i=0\}},
    \label{eq:lambdaij}
\end{equation}
where $\indicator \{i=0\}$ is an indicator for external exposure.
In subsequent penalized estimation, we penalize only the regression coefficients $\bm{\beta}$.
Baseline rate parameters $(\lambda_0,\mu_0)$ and any distributional shape parameters are treated as nuisance parameters and are not penalized.

\subsection{Pairwise AFT likelihood}

We formulate a parametric pairwise likelihood for contact interval data using the counting-process representation developed in earlier pairwise survival work \citep{kenah2011contact, kenah2013non, sharker2024pairwise}.
Let $\bm{\theta}=(\bm{\beta}^\top,\bm{\psi}^\top)^\top$ denote the full parameter vector, where $\bm{\beta}$ collects regression coefficients for covariate effects and $\bm{\psi}$ collects the remaining nuisance parameters indexing the chosen contact-interval family (e.g., baseline log-rate terms and, when applicable, log-shape terms for internal and external sources).
For each ordered pair $(i,j)$, let $h_{ij}(\cdot;\bm{\theta})$ and $S_{ij}(\cdot;\bm{\theta})$ denote the hazard and survival functions of the contact interval distribution parameterized by the pair-specific rate $\lambda_{ij}$ in \eqref{eq:lambdaij}.

\subsubsection{Who-infected-whom is observed}

When the infector of each infected individual is observed, infectious contact from $i$ to $j$ may be represented by a counting process $N_{ij}(t)$ that jumps at the time of infectious contact from $i$ to $j$.
Using the at-risk process $Y_{ij}(t)$ defined in Section~\ref{sec:data}, the log likelihood contribution for an ordered pair $(i,j)$ over $[0,T]$ is
\begin{equation}
\ell_{ij}(\bm{\theta})
=
\int_0^T
  \log h_{ij}\big(u - t_i - \varepsilon_i;\bm{\theta}\big)\,
  \dif N_{ij}(u)
-
\int_0^T
  h_{ij}\big(u - t_i - \varepsilon_i;\bm{\theta}\big)\,
  Y_{ij}(u)\,
  \dif u,
\label{eq:ll_ij_obs}
\end{equation}
with the convention that for the external source $i=0$ the infectious age is $u$, so $h_{0j}(u;\bm{\theta})$ is used without a time shift.
The full log likelihood is obtained by summing \eqref{eq:ll_ij_obs} over all ordered pairs with $i\in\{0,1,\ldots,n\}$ and $i\neq j$:
\begin{equation}
\ell(\bm{\theta})
=
\sum_{j=1}^{n}
\sum_{\substack{i=0\\ i\neq j}}^{n}
\ell_{ij}(\bm{\theta}).
\label{eq:ll_obs}
\end{equation}
In this setting, exactly one process $N_{ij}(t)$ jumps for each infected $j$ (the observed infector) and contributes a hazard term.
Each ordered pair $(i, j)$ contributes to the log likelihood through its cumulative hazard term.
The cumulative hazard terms in the log likelihood do not depend on who-infected-whom.

\subsubsection{Who-infected-whom is not observed}

In the more realistic case when who-infected-whom is not observed, we do not observe the individual pair processes $\{N_{ij}(t)\}$, but only the aggregated infection counting process
\[
N_{\cdot j}(t)
=
\sum_{\substack{i=0\\ i\neq j}}^{n} N_{ij}(t),
\]
which records the time at which individual $j$ becomes infected, regardless of source.
In this case, the total hazard for individual $j$ at time $t$ is obtained by summing over all potential sources of infectious contact,
\begin{equation}
h_{\cdot j}(t;\bm{\theta})
=
\sum_{\substack{i=0\\ i\neq j}}^{n}
C_{ij}\,\mathcal{I}_i(t)\,
h_{ij}\big(t - t_i - \varepsilon_i;\bm{\theta}\big),
\label{eq:totalhaz}
\end{equation}
again with the convention that $h_{0j}(t;\bm{\theta})$ uses infectious age $t$ for $i=0$.
The corresponding log likelihood contribution for individual $j$ is
\begin{equation}
\ell_{\cdot j}(\bm{\theta})
=
\int_0^T
  \log h_{\cdot j}(u;\bm{\theta})\,
  \dif N_{\cdot j}(u)
-
\int_0^T
  h_{\cdot j}(u;\bm{\theta})\,
  \mathcal{S}_j(u)\,
  \dif u,
\label{eq:ll_j}
\end{equation}
and the observed-data log likelihood is
\begin{equation}
\ell(\bm{\theta})
=
\sum_{j=1}^{n} \ell_{\cdot j}(\bm{\theta}).
\label{eq:ll_unobs}
\end{equation}
If who-infected-whom is partially observed, we have
\begin{equation}
h_{\cdot j}(t;\bm{\theta})
=
\sum_{\substack{i\in \mathcal{V}_j}}^{n}
C_{ij}\,\mathcal{I}_i(t)\,
h_{ij}\big(t - t_i - \varepsilon_i;\bm{\theta}\big).
\end{equation}
where the infectious set $\mathcal{V}_j$ excludes any $i$ who is known not to have infected $j$.

\subsection{Adaptive LASSO Estimation}
\label{sec:alasso}

We estimate the full parameter vector $\bm\theta$ defined above by maximizing the adaptive $\ell_{1}$-penalized log likelihood
\begin{equation}
Q_{\mathrm{AL}}(\bm{\theta})
=
\ell(\bm{\theta})
-
\lambda
\sum_{k=1}^{p}
w_k |\beta_k|,
\qquad
w_k = |\widetilde{\beta}_k|^{-\eta},
\label{eq:penll}
\end{equation}
where $p=\dim(\bm\beta)$ is the number of penalized regression coefficients, $\lambda>0$ controls the penalty strength, and $\eta>1$ (typically $\eta=1.75$) is the adaptive exponent.
The weights are formed from an initial consistent estimator $\widetilde{\bm{\beta}}$ (e.g., the unpenalized maximum likelihood estimator \citep{sharker2024pairwise}), so that coefficients estimated to be near zero receive larger penalties.
When \emph{who-infected-whom is observed}, we take $\ell(\bm{\theta})$ to be the complete-data log likelihood in \eqref{eq:ll_obs}; when \emph{who-infected-whom is not observed}, we instead use the observed-data log likelihood in \eqref{eq:ll_unobs}.
In all penalized fits, we penalize only the regression coefficients $\bm\beta$ and treat baseline and shape parameters in $\bm\psi$ as unpenalized nuisance parameters.
Setting $w_k\equiv 1$ reduces \eqref{eq:penll} to the standard LASSO \citep{tibshirani1996regression}.

To optimize \eqref{eq:penll} for a fixed $\lambda$, we employ coordinate descent \citep{wright2015coordinate,Tong2008cd, breheny2011coordinate}.
Throughout, we index the full parameter vector by $j\in\{1,\ldots,d\}$, where $d=\dim(\bm\theta)$, and index penalized regression coefficients by $k\in\{1,\ldots,p\}$.
We order $\bm\theta$ as $\bm\theta=(\bm\beta^\top,\bm\psi^\top)^\top$, so that $\theta_k=\beta_k$ for $k=1,\ldots,p$.
Let $f(\bm\theta)=-\ell(\bm\theta)$ denote the unpenalized negative log likelihood.
For coordinate $j$, define the one-dimensional slice
\[
f_j(z)=f(\theta_1,\ldots,\theta_{j-1},z,\theta_{j+1},\ldots).
\]
We approximate the local gradient and curvature at the current value $\theta_j$ using centered finite differences,
\begin{equation}
g_j
=
\frac{f_j(\theta_j+\varepsilon_{\mathrm{fd}})-f_j(\theta_j-\varepsilon_{\mathrm{fd}})}{2\varepsilon_{\mathrm{fd}}},
\qquad
H_{jj}
=
\frac{f_j(\theta_j+\varepsilon_{\mathrm{fd}})-2f_j(\theta_j)+f_j(\theta_j-\varepsilon_{\mathrm{fd}})}{\varepsilon_{\mathrm{fd}}^{2}},
\label{eq:fd_gh}
\end{equation}
with $\varepsilon_{\mathrm{fd}}=10^{-3}$, and enforce $H_{jj}\ge h_{\min}$ with $h_{\min}=10^{-3}$.
For a penalized regression coefficient $\beta_k$, a second order approximation yields the weighted soft thresholding update
\begin{equation}
\beta_k^{\mathrm{new}}
=
S\left(
  \beta_k - \frac{g_k}{H_{kk}},
  \frac{\lambda w_k}{H_{kk}}
\right),
\qquad
S(z,\tau)
=
\operatorname{sign}(z)\,\max\bigl(|z|-\tau,0\bigr),
\label{eq:update}
\end{equation}
whereas for an unpenalized nuisance coordinate $\theta_j\in\bm\psi$ the update reduces to the Newton step
\begin{equation}
\theta_j^{\mathrm{new}}
=
\theta_j
-
\frac{g_j}{H_{jj}}.
\label{eq:update_psi}
\end{equation}
A derivation of \eqref{eq:update} is provided in the Appendix, and the algorithm is summarized below.

\begin{enumerate}[(1)]
\item \textbf{Inputs.}
Log likelihood $\ell(\bm\theta)$; dimensions $d=\dim(\bm\theta)$ and $p=\dim(\bm\beta)$; initial value $\bm\theta^{(0)}$; penalty $\lambda$; adaptive weights $\bm w=(w_1,\ldots,w_{p})^\top$; finite difference step $\varepsilon_{\mathrm{fd}}=10^{-3}$; curvature floor $h_{\min}=10^{-3}$; step cap $\Delta_{\max}=1$; tolerance $\mathrm{tol}=10^{-3}$; maximum sweeps $R_{\max}=50$.

\item \textbf{Initialize.}
Set $\bm\theta \leftarrow \bm\theta^{(0)}$.

\item \textbf{For $r=1,\ldots,R_{\max}$:}
\begin{enumerate}[(a)]
\item Set $\Delta_{\infty}\leftarrow 0$.

\item \textbf{Update penalized coordinates ($k=1,\ldots,p$).}
\begin{enumerate}[(i)]
\item Compute $(g_k,H_{kk})$ by \eqref{eq:fd_gh} at the current $\theta_k=\beta_k$, then set $H_{kk}\leftarrow \max(H_{kk},h_{\min})$.

\item Compute
\[
\beta_k^{\mathrm{new}}
=
S\!\left(\beta_k-\frac{g_k}{H_{kk}},\ \frac{\lambda w_k}{H_{kk}}\right),
\qquad
\Delta \leftarrow \beta_k^{\mathrm{new}}-\beta_k.
\]

\item Apply the step cap $\Delta\leftarrow \operatorname{sign}(\Delta)\min(|\Delta|,\Delta_{\max})$, update $\beta_k\leftarrow \beta_k+\Delta$ (equivalently $\theta_k\leftarrow \theta_k+\Delta$), and set $\Delta_{\infty}\leftarrow \max(\Delta_{\infty},|\Delta|)$.
\end{enumerate}

\item \textbf{Update unpenalized coordinates ($j=p+1,\ldots,d$).}
\begin{enumerate}[(i)]
\item Compute $(g_j,H_{jj})$ by \eqref{eq:fd_gh} at the current $\theta_j$, then set $H_{jj}\leftarrow \max(H_{jj},h_{\min})$.

\item Compute
\[
\theta_j^{\mathrm{new}}=\theta_j-\frac{g_j}{H_{jj}},
\qquad
\Delta \leftarrow \theta_j^{\mathrm{new}}-\theta_j.
\]

\item Apply the step cap $\Delta\leftarrow \operatorname{sign}(\Delta)\min(|\Delta|,\Delta_{\max})$, update $\theta_j\leftarrow \theta_j+\Delta$, and set $\Delta_{\infty}\leftarrow \max(\Delta_{\infty},|\Delta|)$.
\end{enumerate}

\item \textbf{Check convergence.}
If $\Delta_{\infty}<\mathrm{tol}$, stop and return $\widehat{\bm\theta}_\lambda=\bm\theta$.
\end{enumerate}

\item \textbf{Return.}
If convergence is not reached after $R_{\max}$ sweeps, return the final iterate $\widehat{\bm\theta}_\lambda=\bm\theta$.
\end{enumerate}

\subsection{Asymptotic properties}
\label{sec:asymptotic}

We establish the oracle properties of the adaptive LASSO only in the setting where \emph{who-infected-whom is observed}.
We use a local asymptotic framework in which information accumulates through the \emph{ordered pairs that are ever at risk of transmission} during follow-up.
For each data set in an increasing sequence of observed-WIW data sets, define
\begin{equation}
\mathcal{R}
=
\Bigl\{(i,j): \exists\,\tau \in [0,T]\ \text{such that}\ Y_{ij}(\tau)=1 \Bigr\},
\qquad
m = |\mathcal{R}|.
\label{eq:def_R_m}
\end{equation}
All limits below are taken along this sequence with $m\to\infty$ in probability.
Let $\ell_m(\bm\theta)$ denote the complete-data log likelihood in \eqref{eq:ll_obs} for an observed-WIW data set with $m$ ordered pairs ever at risk.
The pair contributions in \eqref{eq:ll_obs} need not be independent; epidemic dependence is accommodated through the profile-likelihood regularity conditions below.

Let $\bm{\theta}_0=(\bm{\beta}_0^\top,\bm{\psi}_0^\top)^\top$ be the true interior parameter, and let $\mathcal{S}_0=\{k:\beta_{0,k}\neq 0\}$.
Because $\bm\psi$ is unpenalized, we work with the profile log likelihood
\[
\ell_{\text{profile},m}(\bm{\beta})
=
\sup_{\bm{\psi}}\ell_m(\bm{\beta},\bm{\psi}),
\]
and assume that the supremum is attained locally by a regular nuisance-parameter estimator.

As in \citet{zhang2007adaptive}, we assume standard likelihood regularity conditions.
For clarity, define the profile score vector, its $k$th component, and the profile Hessian by
\begin{align*}
\dot\ell_{\text{profile},m}(\bm\beta)
&=
\frac{\partial\ell_{\text{profile},m}(\bm\beta)}
{\partial\bm\beta},
&
\dot\ell_{\text{profile},m,k}(\bm\beta)
&=
\frac{\partial\ell_{\text{profile},m}(\bm\beta)}
{\partial\beta_k},
\\
\ddot\ell_{\text{profile},m}(\bm\beta)
&=
\frac{\partial^2\ell_{\text{profile},m}(\bm\beta)}
{\partial\bm\beta\,\partial\bm\beta^\top}.
\end{align*}
The only consequences of the regularity conditions used in the proofs are that $\ell_{\text{profile},m}$ is twice continuously differentiable near $\bm\beta_0$ and, for every fixed $C<\infty$,
\begin{align}
\bm S_m
\equiv
\frac{1}{\sqrt{m}}\dot\ell_{\text{profile},m}(\bm\beta_0)
&\to_d N(\bm 0,\bm\Sigma),
\label{eq:profile_score}\\[-2pt]
-\frac{1}{m}\ddot\ell_{\text{profile},m}(\bm\beta)
&=
\bm\Sigma+o_p(1)
\quad\text{uniformly for }
\|\bm\beta-\bm\beta_0\|\le C/\sqrt m.
\label{eq:profile_information}
\end{align}
The $k$th component of the normalized profile score is therefore
\[
S_{m,k}
=
[\bm S_m]_k
=
\frac{1}{\sqrt m}
\dot\ell_{\text{profile},m,k}(\bm\beta_0).
\]
Here, $\bm\Sigma$ is the positive-definite limiting efficient information for $\bm\beta$.

Let $\widetilde{\bm\beta}_m$ be an initial estimator, such as the unpenalized maximum likelihood estimator of \citet{sharker2024pairwise}.
For a fixed $\eta>1$, define the weights $w_{m,k}=|\widetilde\beta_{m,k}|^{-\eta}$.
For a tuning sequence $\lambda_m>0$, consider the profiled penalized criterion
\begin{equation}
\mathcal Q_m(\bm\beta)
=
\ell_{\text{profile},m}(\bm\beta)
-
\lambda_m\sum_{k=1}^p w_{m,k}|\beta_k|.
\label{eq:profile_penll}
\end{equation}

\begin{theorem}
\label{thm:rootm}
Suppose that:
\begin{enumerate}[(i)]
\item The dimensions $p$ and $\dim(\bm\psi)$ are fixed, and $\bm\beta_0$ is an interior point.

\item The profile log likelihood $\ell_{\text{profile},m}$ is twice continuously differentiable near $\bm\beta_0$, and \eqref{eq:profile_score}--\eqref{eq:profile_information} hold with $\bm\Sigma$ positive definite.

\item The initial estimator is root-$m$ consistent:
\[
\|\widetilde{\bm\beta}_m-\bm\beta_0\|=O_p(m^{-1/2}).
\]

\item The tuning parameter satisfies
\[
\frac{\lambda_m}{\sqrt m}=O(1).
\]
\end{enumerate}
Then there exists a sequence of local maximizers $\widehat{\bm\beta}_{\lambda,m}$ of $\mathcal Q_m$ such that
\[
\bigl\|\widehat{\bm\beta}_{\lambda,m}-\bm\beta_0\bigr\|
=O_p(m^{-1/2}).
\]
\end{theorem}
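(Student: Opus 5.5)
We follow the standard argument of \citet{fan2001variable} and \citet{zhang2007adaptive}, applied to the profiled criterion \eqref{eq:profile_penll}. Let $\alpha_m=m^{-1/2}$. It suffices to show the following: for every $\epsilon>0$ there is a constant $C<\infty$ such that, for all large $m$,
\[
\Pr\Bigl\{\sup_{\|\bm u\|=C}\mathcal Q_m(\bm\beta_0+\alpha_m\bm u)<\mathcal Q_m(\bm\beta_0)\Bigr\}\ge 1-\epsilon .
\]
Since $\mathcal Q_m$ is continuous, this forces a local maximizer inside the ball $\{\bm\beta_0+\alpha_m\bm u:\|\bm u\|\le C\}$ with probability at least $1-\epsilon$. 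That maximizer is the required $\widehat{\bm\beta}_{\lambda,m}$ with $\|\widehat{\bm\beta}_{\lambda,m}-\bm\beta_0\|=O_p(m^{-1/2})$.

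\textbf{Likelihood part.} Write $D_m(\bm u)=\mathcal Q_m(\bm\beta_0+\alpha_m\bm u)-\mathcal Q_m(\bm\beta_0)$ and split it as $D_m=L_m(\bm u)-P_m(\bm u)$, the profile-likelihood difference minus the penalty difference. Condition (ii) allows a second-order Taylor expansion with an intermediate point $\bm\beta^*$ on the segment. This point satisfies $\|\bm\beta^*-\bm\beta_0\|\le C/\sqrt m$, so \eqref{eq:profile_information} applies uniformly and gives
\[
L_m(\bm u)=\bm u^\top\bm S_m-\tfrac12\bm u^\top\bm\Sigma\bm u+o_p(1)\|\bm u\|^2 .
\]
The $o_p(1)$ term is uniform over $\|\bm u\|=C$. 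By \eqref{eq:profile_score}, $\bm S_m=O_p(1)$, so the linear term is at most $\|\bm u\|\,O_p(1)$. Let $\rho>0$ be the smallest eigenvalue of $\bm\Sigma$. The quadratic term is then at most $-\tfrac{\rho}{2}C^2$, and it dominates the linear term once $C$ is large.

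\textbf{Penalty part.} Coordinates with $k\notin\mathcal S_0$ have $\beta_{0,k}=0$, so their contribution $\lambda_m w_{m,k}|\alpha_m u_k|$ is nonnegative. It can only decrease $D_m$, and we drop it. For $k\in\mathcal S_0$, the triangle inequality gives
\[
\lambda_m w_{m,k}\bigl(|\beta_{0,k}+\alpha_m u_k|-|\beta_{0,k}|\bigr)\ge-\lambda_m w_{m,k}\alpha_m|u_k|=-\frac{\lambda_m}{\sqrt m}\,w_{m,k}|u_k| .
\]
Condition (iii) gives $\widetilde\beta_{m,k}\to_p\beta_{0,k}\neq0$, and by the continuous mapping theorem $w_{m,k}\to_p|\beta_{0,k}|^{-\eta}$, so $w_{m,k}=O_p(1)$. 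Combined with condition (iv), the total penalty contribution from $\mathcal S_0$ is bounded below by $-O_p(1)\,C$, uniformly on $\|\bm u\|=C$. Hence
\[
D_m(\bm u)\le C\,O_p(1)-\tfrac{\rho}{2}C^2+o_p(1)C^2 ,
\]
and the right-hand side is negative with probability at least $1-\epsilon$ once $C$ is chosen large enough.

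\textbf{Main obstacle.} The delicate step is the uniformity of the Taylor remainder. Condition \eqref{eq:profile_information} is stated uniformly on $C/\sqrt m$-balls for each fixed $C$, but $C$ must be chosen depending on $\epsilon$. The fix is to choose $C$ first, from the tightness of $\bm S_m$ and of the weights $w_{m,k}$ for $k\in\mathcal S_0$, and only then invoke \eqref{eq:profile_information} at that fixed $C$. The requirement that the profile supremum be attained by a regular local nuisance estimator ensures that $\ell_{\text{profile},m}$ is well defined and smooth on this shrinking neighborhood, as condition (ii) assumes.
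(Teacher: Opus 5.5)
Your proposal is correct and follows the paper's proof essentially step for step. Both use the same boundary-of-the-$C/\sqrt m$-ball argument and the same second-order expansion via \eqref{eq:profile_score}--\eqref{eq:profile_information}. Both also drop the nonnegative inactive penalty terms and bound the active ones by the reverse triangle inequality, using $w_{m,k}=O_p(1)$ and $\lambda_m/\sqrt m=O(1)$. Your remark about fixing $C$ first, from the tightness of $\bm S_m$ and the active weights, and only then invoking the uniform Hessian condition, usefully makes explicit a quantifier order the paper leaves implicit.
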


Theorem~\ref{thm:rootm} is a localization result.
It guarantees that the penalized criterion has a local maximizer within an $m^{-1/2}$ neighborhood of the true coefficient vector, but it does not yet claim that zero coefficients are selected exactly.

\begin{theorem}
\label{thm:oracle}
Suppose that:
\begin{enumerate}[(i)]
\item Assumptions (i)--(iii) of Theorem~\ref{thm:rootm} hold.

\item The tuning parameter satisfies
\[
\frac{\lambda_m}{\sqrt{m}}\to 0,
\qquad
\lambda_m m^{(\eta-1)/2}\to\infty.
\]
\end{enumerate}
Let $\widehat{\bm\beta}_{\lambda,m}$ be any root-$m$ consistent local maximizer of $\mathcal Q_m$.
Then
\begin{align*}
\Pr\left(
  \widehat{\bm\beta}_{\lambda,m,\mathcal{S}_0^c}=\bm 0
\right)
&\to 1,\\[4pt]
\sqrt{m}\Bigl(
  \widehat{\bm{\beta}}_{\lambda,m,\mathcal{S}_0}
  -
  \bm{\beta}_{0,\mathcal{S}_0}
\Bigr)
&\to_d
N\left(\bm 0,\,\bm{\Sigma}_{\mathcal{S}_0}^{-1}\right),
\end{align*}
where $\bm{\Sigma}_{\mathcal{S}_0}$ is the principal submatrix of $\bm{\Sigma}$ indexed by $\mathcal{S}_0$.
\end{theorem}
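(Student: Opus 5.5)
We follow the standard argument of \citet{zou2006adaptive} and \citet{zhang2007adaptive}, applied to the profiled criterion \eqref{eq:profile_penll}. The only inputs are \eqref{eq:profile_score}--\eqref{eq:profile_information}, the root-$m$ consistency of $\widetilde{\bm\beta}_m$, and the two rate conditions on $\lambda_m$. Throughout, $\widehat{\bm\beta}=\widehat{\bm\beta}_{\lambda,m}$ is a root-$m$ consistent local maximizer. The first step records the behaviour of the weights.
\begin{itemize}
\item For $k\in\mathcal S_0$, we have $\widetilde\beta_{m,k}\to_p\beta_{0,k}\neq0$, so $w_{m,k}=O_p(1)$. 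Hence $\lambda_m w_{m,k}/\sqrt m\to_p0$.
\item For $k\notin\mathcal S_0$, we have $\sqrt m\,|\widetilde\beta_{m,k}|=O_p(1)$. Hence $w_{m,k}=m^{\eta/2}(\sqrt m|\widetilde\beta_{m,k}|)^{-\eta}$, and $\lambda_m w_{m,k}/\sqrt m$ is bounded below by $\lambda_m m^{(\eta-1)/2}$ times a quantity bounded away from zero in probability. So $\lambda_m w_{m,k}/\sqrt m\to_p\infty$.
\end{itemize}

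\textbf{Selection consistency.} Fix $k\notin\mathcal S_0$ and suppose $\widehat\beta_k\neq0$. Since $\mathcal Q_m$ is differentiable in $\beta_k$ away from $0$, the stationarity condition holds:
\[
\frac{1}{\sqrt m}\dot\ell_{\text{profile},m,k}(\widehat{\bm\beta})=\frac{\lambda_m w_{m,k}}{\sqrt m}\operatorname{sign}(\widehat\beta_k).
\]
Take a mean-value expansion around $\bm\beta_0$. Using \eqref{eq:profile_information} on the ball $\|\bm\beta-\bm\beta_0\|\le C/\sqrt m$, which contains $\widehat{\bm\beta}$ with probability at least $1-\epsilon$ for suitable $C$, the left side equals
\[
S_{m,k}-[\bm\Sigma\sqrt m(\widehat{\bm\beta}-\bm\beta_0)]_k+o_p(1)=O_p(1),
\]
by \eqref{eq:profile_score}. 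The right side diverges in absolute value by the first step. Therefore $\Pr(\widehat\beta_k\neq0)\to0$, and a union bound over the finitely many $k\notin\mathcal S_0$ gives the first claim.

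This is the step I expect to be the main obstacle. The difficulty is getting the uniform Hessian approximation to control the mean-value expansion at a random point. It also requires that the local maximizer lie in a region where the one-sided derivative argument is valid. I will handle this by intersecting with the event $\{\|\widehat{\bm\beta}-\bm\beta_0\|\le C/\sqrt m\}$ and then letting $C\to\infty$.

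\textbf{Asymptotic normality.} On the event $\{\widehat{\bm\beta}_{\mathcal S_0^c}=\bm0\}$, whose probability tends to one, root-$m$ consistency implies $\widehat\beta_k\neq0$ for all $k\in\mathcal S_0$ with probability tending to one. So the stationarity equations hold for the $\mathcal S_0$ coordinates:
\[
\frac{1}{\sqrt m}\dot\ell_{\text{profile},m,\mathcal S_0}(\widehat{\bm\beta})=\Bigl(\frac{\lambda_m w_{m,k}}{\sqrt m}\operatorname{sign}(\widehat\beta_k)\Bigr)_{k\in\mathcal S_0}=o_p(1).
\]
Expanding the left side about $\bm\beta_0$ and using \eqref{eq:profile_information} together with $\widehat{\bm\beta}_{\mathcal S_0^c}=\bm 0=\bm\beta_{0,\mathcal S_0^c}$ gives
\[
\bm S_{m,\mathcal S_0}-\bm\Sigma_{\mathcal S_0}\sqrt m(\widehat{\bm\beta}_{\mathcal S_0}-\bm\beta_{0,\mathcal S_0})=o_p(1).
\]
Since $\bm\Sigma_{\mathcal S_0}$ is invertible, we obtain
\[
\sqrt m(\widehat{\bm\beta}_{\mathcal S_0}-\bm\beta_{0,\mathcal S_0})=\bm\Sigma_{\mathcal S_0}^{-1}\bm S_{m,\mathcal S_0}+o_p(1).
\]
By \eqref{eq:profile_score} and Slutsky's lemma, this converges to $N(\bm0,\bm\Sigma_{\mathcal S_0}^{-1}\bm\Sigma_{\mathcal S_0}\bm\Sigma_{\mathcal S_0}^{-1})=N(\bm0,\bm\Sigma_{\mathcal S_0}^{-1})$. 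Here $\bm\Sigma_{\mathcal S_0}$ also serves as the covariance of $\bm S_{m,\mathcal S_0}$, being the corresponding block of $\bm\Sigma$.
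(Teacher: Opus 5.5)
Your proposal is correct and follows essentially the same route as the paper's proof. The steps match: the weight dichotomy ($\lambda_m w_{m,k}/\sqrt m \to_p 0$ on $\mathcal S_0$, $\to_p \infty$ off it), then a clash between an $O_p(1)$ normalized profile score and a diverging penalty derivative for $k\notin\mathcal S_0$, then an expansion of the active first-order conditions (with the cross term killed by sparsity) followed by Slutsky. The only cosmetic difference is in the sparsity step: you apply the stationarity condition at $\widehat{\bm\beta}$ itself, via a mean-value expansion on the event $\|\widehat{\bm\beta}-\bm\beta_0\|\le C/\sqrt m$, whereas the paper shows that $\partial\mathcal Q_m/\partial\beta_k$ has sign opposite to $\beta_k$ uniformly over the $C/\sqrt m$ ball.
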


Theorem~\ref{thm:oracle} strengthens this local consistency result.
The first conclusion gives consistent variable selection, and the second shows that the estimated nonzero coefficients have the same first-order limit distribution as they would if the true active set were known in advance.

\subsection{Variance Estimation and Tuning Parameter Selection}
\label{sec:var_tune}

\subsubsection{Approximate variance}
\label{sec:alasso_var}

Following the standard error formulations of \citet{fan2001variable, zou2006adaptive} and the extension to survival models by \citet{zhang2007adaptive}, we approximate the adaptive LASSO estimator by a local quadratic (iteratively reweighted ridge) approximation to the $\ell_1$ penalty.
Let $\widehat{\bm\theta}_\lambda=(\widehat{\bm\beta}_\lambda^\top,\widehat{\bm\psi}_\lambda^\top)^\top$ maximize \eqref{eq:penll} for a fixed $\lambda$, and let
\[
\bm U(\bm\theta)=\frac{\partial \ell(\bm\theta)}{\partial \bm\theta},
\qquad
\bm H(\bm\theta)=-\frac{\partial^2 \ell(\bm\theta)}{\partial \bm\theta\,\partial \bm\theta^\top}
\]
denote the score and observed information matrix.

For numerical stability, fix a small constant $\xi = 10^{-5}$.
At a current iterate $\bm\theta^{(r)}$, we use the local quadratic approximation
\[
|\beta_k|
\approx
\frac{\beta_k^2}{2\left(|\beta_k^{(r)}|+\xi\right)},
\]
so that the adaptive penalty $\lambda\sum_{k=1}^p w_k|\beta_k|$ is approximated by $\frac{\lambda}{2}\bm\theta^\top \bm A_{(r)} \bm\theta$, where $\bm A_{(r)}$ is block-diagonal with a nonzero block only on the penalized $\bm\beta$ coordinates:
\[
\bm A_{(r)}
=
\begin{pmatrix}
\mathrm{diag}\left(
\frac{w_1}{|\beta^{(r)}_1|+\xi},\ldots,
\frac{w_p}{|\beta^{(r)}_p|+\xi}
\right) & \bm 0 \\
\bm 0 & \bm 0
\end{pmatrix}.
\]
In particular, all coordinates in $\bm\psi$ (e.g., baseline log-rate and log-shape parameters) correspond to zeros in $\bm A_{(r)}$ and are therefore unpenalized.
Using the same approximation, we estimate the covariance of $\widehat{\bm\theta}_\lambda$ by the sandwich form
\begin{equation}
\widehat{\mathrm{cov}}(\widehat{\bm\theta}_\lambda)
=
\left\{\bm H(\widehat{\bm\theta}_\lambda)+\lambda \bm A_{\lambda}\right\}^{-1}
\,
\bm H(\widehat{\bm\theta}_\lambda)
\,
\left\{\bm H(\widehat{\bm\theta}_\lambda)+\lambda \bm A_{\lambda}\right\}^{-1},
\label{eq:alasso_cov_theta}
\end{equation}
where $\bm A_{\lambda}$ is $\bm A_{(r)}$ evaluated at the final iterate.
Wald-type standard errors are obtained from the diagonal entries of \eqref{eq:alasso_cov_theta}.

In practice, inference for penalized regression effects is reported only for the nonzero coefficients at the selected tuning parameter $\lambda$.
Let $\mathcal S=\{k:\widehat{\beta}_{\lambda,k}\neq 0\}$ denote the active penalized set and let $\mathcal U$ denote the index set of all unpenalized coordinates (including $\bm\psi$ and any unpenalized regression components such as intercept terms).
We apply \eqref{eq:alasso_cov_theta} to the reduced parameter vector $\widehat{\bm\theta}_{\lambda,\mathcal A}$ with $\mathcal A=\mathcal U\cup\mathcal S$ using the corresponding $\mathcal A\times\mathcal A$ submatrices of $\bm H(\widehat{\bm\theta}_\lambda)$ and $\bm A_\lambda$, treating coordinates in $\mathcal S^c$ as fixed at zero.
A detailed derivation is provided in Appendix.

\subsubsection{Tuning parameter selection}
For each candidate $\lambda$, we compute $\widehat{\bm{\theta}}_{\lambda} = (\widehat{\bm{\beta}}_{\lambda}^\top,\widehat{\bm{\psi}}_{\lambda}^\top)^\top$ and select $\lambda$ by minimizing an information criterion computed from the unpenalized log likelihood evaluated at the penalized estimate:
\begin{align}
\mathrm{AIC}(\lambda)
&=
2 \, \mathrm{Eff}(\lambda)-2 \, \ell (\widehat{\bm{\theta}}_{\lambda}), \label{eq:aic_lambda}\\
\mathrm{BIC}(\lambda)
&=
\log(m) \, \mathrm{Eff}(\lambda)-2 \, \ell (\widehat{\bm{\theta}}_{\lambda} ), \label{eq:bic_lambda}
\end{align}
where $m=|\mathcal{R}|$ is the number of ordered pairs ever at risk in \eqref{eq:def_R_m} and
\[
{\mathrm{Eff}}(\lambda)=\sum_{k=1}^{p}\indicator\left(\widehat{\beta}_{\lambda,k}\neq 0\right)
\]
is the effective number of selected covariate effects.
Because our primary goal is variable selection and interpretable transmission effects rather than prediction, we use BIC in \eqref{eq:bic_lambda} as the primary tuning criterion.

% ===============================
% Simulation
% ===============================
\section{Simulation Study}
\label{sec:simulation}

We conducted a simulation study to assess the finite-sample performance of the adaptive LASSO for pairwise AFT models under four epidemiologic study designs.
The simulations followed a household transmission model in which infection spreads within small, fully connected groups.

The population consisted of 500 households of size five (2{,}500 individuals).
Within each household, all ordered pairs of individuals were treated as potential transmission pairs, and no transmission was allowed between households.
Each infected individual had a latent period of 0 and an infectious period of length 1.
Epidemics were simulated until 1{,}000 infections occurred, ensuring adequate numbers of both internal and external infection events.

In all simulations, baseline internal contact intervals followed a log-logistic distribution with shape parameter $\gamma = 2$ and baseline rate $\lambda_0 = 0.5$, while external contact intervals followed an exponential distribution with rate $\mu_0 = -0.5 \ln 0.8$.
The rate parameter for the contact interval distribution for each pair $(i,j)$ was
\[
\lambda_{ij}
=
\exp\Bigl(
  \beta_{\mathrm{inf}}^{\top} \, X_i
  + \beta_{\mathrm{sus}}^{\top} \, X_j
  + \indicator \{i\neq 0\} \, \ln \lambda_0
  + \indicator \{i=0\} \, \ln \mu_0
\Bigr),
\qquad X_0 = 0.
\]
For covariates, we considered two scenarios:
\begin{itemize}
    \item[(1)] \textbf{Multivariate normal covariates.}
    Each individual $i$ was assigned a five-dimensional continuous covariate vector
    \[
    X_i \sim \mathcal{N}_5(0,\,\Sigma),
    \qquad
    \Sigma_{jk} =
    \begin{cases}
    1, & j=k,\\
    0.5, & j\ne k,
    \end{cases}
    \]
    and covariate effects on infectiousness and susceptibility were
    \[
    \beta_{\mathrm{inf}} = (0.7,\,0.7,\,0,\,0,\,0),
    \qquad
    \beta_{\mathrm{sus}} = (0.7,\,0.7,\,0,\,0,\,0).
    \]

    \item[(2)] \textbf{Bernoulli covariates.}
    In the second scenario, each individual $i$ was assigned a five-dimensional binary covariate vector
    \[
    X_i = (Z_{i1},\dots,Z_{i5})^{\top},
    \qquad
    Z_{ik} \sim \mathrm{Bernoulli}(0.5) \;\text{independently},
    \]
    and the covariate effects were kept the same as in the multivariate normal case.
\end{itemize}

Each simulated data set was analyzed under four study designs:
(i) complete cohort with who-infected-whom (WIW) observed, (ii) complete cohort with WIW unobserved, (iii) contact tracing (CT) with delayed entry and WIW observed, and (iv) contact tracing (CT) with delayed entry and WIW unobserved.
In the complete-cohort designs (i) and (ii), follow-up for all individuals begins at time zero, which serves as the time origin for external infectious contact intervals and implies no left truncation.
In the contact-tracing designs (iii) and (iv), follow-up of each individual begins at the infection time of the index case in their household, with time at risk of external infectious contact prior to the start of follow-up left-truncated and individuals in households with no infections excluded from the study.

For the LASSO and adaptive LASSO (ALASSO) estimators, we employed the coordinate descent algorithm described above, and the tuning parameter $\lambda$ was selected by minimizing the proposed BIC.
To construct the candidate set of tuning parameters, we generated a grid of 50 values evenly spaced on the log scale between 0.1 and 100:
\[
\lambda
=
\exp\left(
  \log(0.1)
  +
  \frac{g-1}{49}\,\bigl[\log(100)-\log(0.1)\bigr]
\right),
\qquad g = 1,\dots,50.
\]
The MLE, used to initialize the adaptive LASSO weights, was obtained by maximizing the pairwise likelihood of \citet{sharker2024pairwise} in R using TranStat (version 0.3.7; \url{https://github.com/ekenah/TranStat}).
The R functions implementing penalized regression (LASSO and ALASSO) are provided in the Supplementary Material.

Estimation accuracy was assessed using the sum squared error (SSE),
\[
\mathrm{SSE}
=
(\widehat{\beta}_{\mathrm{inf}} - \beta_{\mathrm{inf}})^{\top}
(\widehat{\beta}_{\mathrm{inf}} - \beta_{\mathrm{inf}})
+
(\widehat{\beta}_{\mathrm{sus}} - \beta_{\mathrm{sus}})^{\top}
(\widehat{\beta}_{\mathrm{sus}} - \beta_{\mathrm{sus}}),
\]
which measures the total squared deviation of the estimated infectiousness and susceptibility effects from their true values.
Across 100 Monte Carlo iterations, we report the mean squared error (MSE) as the average SSE across iterations.
Variable selection performance was evaluated using false positive (FP) and false negative (FN) counts.
A false positive corresponds to a covariate whose true coefficient is zero but is incorrectly estimated as nonzero, indicating selection of a noise variable, whereas a false negative occurs when a truly nonzero coefficient is incorrectly shrunk to zero, indicating failure to identify a true signal.
FP and FN were computed separately for infectiousness and susceptibility to reflect their distinct contributions in pairwise transmission models.
Together, these metrics provide a comprehensive evaluation of the MLE, LASSO, and ALASSO under both correctly specified and misspecified contact-interval models, across multiple covariate structures and epidemiologic designs.

\begin{table}[h!]
\scriptsize
\centering
\caption{
Performance of the MLE, LASSO, and ALASSO for data generated with log-logistic internal and exponential external contact intervals.
Covariates are five-dimensional multivariate normal vectors with $\Sigma_{jj}=1$ and $\Sigma_{jk}=0.5$ for $j\neq k$.
Results are based on 100 replicates.
}
\label{tab:sim_mvn}
\vspace{4pt}
\begin{tabular}{l l c c c c c}
\toprule
\multicolumn{7}{c}{\textbf{(A) Correctly specified model: internal = log-logistic, external = exponential}}\\
\midrule
Design & Method & MSE & FP$_{\text{sus}}$ & FN$_{\text{sus}}$ & FP$_{\text{inf}}$ & FN$_{\text{inf}}$ \\
\midrule
Full Cohort (WIW obs)   & MLE    & 0.0116 & --   & --   & --   & --   \\
                        & LASSO  & 0.0129 & 0.90 & 0.00 & 1.10 & 0.00 \\
                        & ALASSO & 0.0047 & 0.01 & 0.00 & 0.02 & 0.00 \\[2pt]
Full Cohort (WIW unobs) & MLE    & 0.0148 & --   & --   & --   & --   \\
                        & LASSO  & 0.0178 & 0.95 & 0.00 & 0.93 & 0.00 \\
                        & ALASSO & 0.0058 & 0.01 & 0.00 & 0.01 & 0.00 \\[2pt]
CT (WIW obs)            & MLE    & 0.0125 & --   & --   & --   & --   \\
                        & LASSO  & 0.0140 & 0.77 & 0.00 & 1.09 & 0.00 \\
                        & ALASSO & 0.0052 & 0.02 & 0.00 & 0.02 & 0.00 \\[2pt]
CT (WIW unobs)          & MLE    & 0.0162 & --   & --   & --   & --   \\
                        & LASSO  & 0.0251 & 0.83 & 0.00 & 0.84 & 0.00 \\
                        & ALASSO & 0.0068 & 0.01 & 0.00 & 0.02 & 0.00 \\
\midrule
\multicolumn{7}{c}{\textbf{(B) Misspecified model: internal = Weibull, external = exponential}}\\
\midrule
Design & Method & MSE & FP$_{\text{sus}}$ & FN$_{\text{sus}}$ & FP$_{\text{inf}}$ & FN$_{\text{inf}}$ \\
\midrule
Full Cohort (WIW obs)   & MLE    & 0.0183 & --   & --   & --   & --   \\
                        & LASSO  & 0.0215 & 1.17 & 0.00 & 1.39 & 0.00 \\
                        & ALASSO & 0.0078 & 0.03 & 0.00 & 0.04 & 0.00 \\[2pt]
Full Cohort (WIW unobs) & MLE    & 0.0236 & --   & --   & --   & --   \\
                        & LASSO  & 0.0270 & 1.15 & 0.00 & 1.29 & 0.00 \\
                        & ALASSO & 0.0102 & 0.03 & 0.00 & 0.05 & 0.00 \\[2pt]
CT (WIW obs)            & MLE    & 0.0202 & --   & --   & --   & --   \\
                        & LASSO  & 0.0245 & 1.07 & 0.00 & 1.41 & 0.00 \\
                        & ALASSO & 0.0091 & 0.05 & 0.00 & 0.06 & 0.00 \\[2pt]
CT (WIW unobs)          & MLE    & 0.0274 & --   & --   & --   & --   \\
                        & LASSO  & 0.0556 & 0.93 & 0.00 & 0.95 & 0.00 \\
                        & ALASSO & 0.0140 & 0.02 & 0.00 & 0.05 & 0.00 \\
\midrule
\multicolumn{7}{c}{\textbf{(C) Misspecified model: internal = exponential, external = exponential}}\\
\midrule
Design & Method & MSE & FP$_{\text{sus}}$ & FN$_{\text{sus}}$ & FP$_{\text{inf}}$ & FN$_{\text{inf}}$ \\
\midrule
Full Cohort (WIW obs)   & MLE    & 0.0617 & --   & --   & --   & --   \\
                        & LASSO  & 0.0221 & 0.88 & 0.00 & 0.94 & 0.00 \\
                        & ALASSO & 0.0496 & 0.00 & 0.00 & 0.01 & 0.00 \\[2pt]
Full Cohort (WIW unobs) & MLE    & 0.0418 & --   & --   & --   & --   \\
                        & LASSO  & 0.0131 & 1.03 & 0.00 & 0.91 & 0.00 \\
                        & ALASSO & 0.0261 & 0.00 & 0.00 & 0.01 & 0.00 \\[2pt]
CT (WIW obs)            & MLE    & 0.0775 & --   & --   & --   & --   \\
                        & LASSO  & 0.0294 & 0.70 & 0.00 & 1.00 & 0.00 \\
                        & ALASSO & 0.0640 & 0.01 & 0.00 & 0.02 & 0.00 \\[2pt]
CT (WIW unobs)          & MLE    & 0.0494 & --   & --   & --   & --   \\
                        & LASSO  & 0.0228 & 0.69 & 0.00 & 0.73 & 0.00 \\
                        & ALASSO & 0.0326 & 0.01 & 0.00 & 0.00 & 0.00 \\
\bottomrule
\end{tabular}
\end{table}

Table~\ref{tab:sim_mvn} summarizes estimation and variable-selection performance of the MLE, LASSO, and ALASSO across the four study designs under multivariate normal covariates.
Under correct specification (Panel A), the ALASSO achieved the smallest MSE in each design and selected sparse models with FP counts near zero, while retaining all truly nonzero effects (FN $=0$ for both infectiousness and susceptibility).
In contrast, the LASSO selected approximately one noise covariate on average in each component and yielded larger MSE than the ALASSO.
Under misspecification of the internal contact-interval distribution to a Weibull model (Panel B), MSE increased for all methods, with the largest degradation when who-infected-whom was unobserved.
Nevertheless, the ALASSO remained the most accurate estimator and continued to substantially reduce false positives relative to the LASSO.
Under more severe misspecification to an exponential internal model (Panel C), the MLE exhibited substantial loss of accuracy, whereas both penalized estimators continued to control false positives.
In this setting, the ALASSO exhibited larger MSE than in Panels A and B, whereas the LASSO showed a mixed pattern across designs, with MSE increasing in some settings but decreasing in others.

\begin{table}[h!]
\scriptsize
\centering
\caption{
Performance of the MLE, LASSO, and ALASSO for data generated with log-logistic internal and exponential external contact intervals.
Covariates are five-dimensional binary vectors $X_i=(Z_{i1},\dots,Z_{i5})^{\top}$ with $Z_{ik}\sim \mathrm{Bernoulli}(0.5)$ independently.
Results are based on 100 replicates.
}
\label{tab:sim_bern}
\vspace{4pt}
\begin{tabular}{l l c c c c c}
\toprule
\multicolumn{7}{c}{\textbf{(A) Correctly specified model: internal = log-logistic, external = exponential}}\\
\midrule
Design & Method & MSE & FP$_{\text{sus}}$ & FN$_{\text{sus}}$ & FP$_{\text{inf}}$ & FN$_{\text{inf}}$ \\
\midrule
Full Cohort (WIW obs)   & MLE    & 0.0250 & --   & --   & --   & --   \\
                        & LASSO  & 0.0712 & 0.33 & 0.00 & 0.27 & 0.00 \\
                        & ALASSO & 0.0111 & 0.00 & 0.00 & 0.01 & 0.00 \\[2pt]
Full Cohort (WIW unobs) & MLE    & 0.0371 & --   & --   & --   & --   \\
                        & LASSO  & 0.0800 & 1.19 & 0.00 & 0.80 & 0.00 \\
                        & ALASSO & 0.0178 & 0.00 & 0.00 & 0.00 & 0.00 \\[2pt]
CT (WIW obs)            & MLE    & 0.0265 & --   & --   & --   & --   \\
                        & LASSO  & 0.1055 & 0.20 & 0.00 & 0.24 & 0.00 \\
                        & ALASSO & 0.0115 & 0.00 & 0.00 & 0.00 & 0.00 \\[2pt]
CT (WIW unobs)          & MLE    & 0.0417 & --   & --   & --   & --   \\
                        & LASSO  & 0.1218 & 0.68 & 0.00 & 0.46 & 0.00 \\
                        & ALASSO & 0.0207 & 0.02 & 0.00 & 0.01 & 0.00 \\
\midrule
\multicolumn{7}{c}{\textbf{(B) Misspecified model: internal = Weibull, external = exponential}}\\
\midrule
Design & Method & MSE & FP$_{\text{sus}}$ & FN$_{\text{sus}}$ & FP$_{\text{inf}}$ & FN$_{\text{inf}}$ \\
\midrule
Full Cohort (WIW obs)   & MLE    & 0.0287 & --   & --   & --   & --   \\
                        & LASSO  & 0.0970 & 0.58 & 0.00 & 0.52 & 0.00 \\
                        & ALASSO & 0.0146 & 0.00 & 0.00 & 0.02 & 0.00 \\[2pt]
Full Cohort (WIW unobs) & MLE    & 0.0414 & --   & --   & --   & --   \\
                        & LASSO  & 0.0738 & 1.44 & 0.00 & 0.97 & 0.00 \\
                        & ALASSO & 0.0218 & 0.01 & 0.00 & 0.02 & 0.00 \\[2pt]
CT (WIW obs)            & MLE    & 0.0314 & --   & --   & --   & --   \\
                        & LASSO  & 0.1467 & 0.35 & 0.00 & 0.38 & 0.00 \\
                        & ALASSO & 0.0158 & 0.01 & 0.00 & 0.02 & 0.00 \\[2pt]
CT (WIW unobs)          & MLE    & 0.0468 & --   & --   & --   & --   \\
                        & LASSO  & 0.1454 & 0.76 & 0.00 & 0.49 & 0.00 \\
                        & ALASSO & 0.0288 & 0.05 & 0.00 & 0.02 & 0.00 \\
\midrule
\multicolumn{7}{c}{\textbf{(C) Misspecified model: internal = exponential, external = exponential}}\\
\midrule
Design & Method & MSE & FP$_{\text{sus}}$ & FN$_{\text{sus}}$ & FP$_{\text{inf}}$ & FN$_{\text{inf}}$ \\
\midrule
Full Cohort (WIW obs)   & MLE    & 0.1943 & --   & --   & --   & --   \\
                        & LASSO  & 0.0428 & 0.10 & 0.00 & 0.06 & 0.00 \\
                        & ALASSO & 0.1634 & 0.00 & 0.00 & 0.00 & 0.00 \\[2pt]
Full Cohort (WIW unobs) & MLE    & 0.1242 & --   & --   & --   & --   \\
                        & LASSO  & 0.2793 & 0.21 & 0.00 & 0.07 & 0.05 \\
                        & ALASSO & 0.0881 & 0.00 & 0.00 & 0.01 & 0.00 \\[2pt]
CT (WIW obs)            & MLE    & 0.2286 & --   & --   & --   & --   \\
                        & LASSO  & 0.0337 & 0.08 & 0.00 & 0.09 & 0.00 \\
                        & ALASSO & 0.1920 & 0.00 & 0.00 & 0.00 & 0.00 \\[2pt]
CT (WIW unobs)          & MLE    & 0.1168 & --   & --   & --   & --   \\
                        & LASSO  & 0.1575 & 0.30 & 0.00 & 0.06 & 0.00 \\
                        & ALASSO & 0.0819 & 0.00 & 0.00 & 0.02 & 0.00 \\
\bottomrule
\end{tabular}
\end{table}

Table~\ref{tab:sim_bern} summarizes estimation and variable-selection performance of the MLE, LASSO, and ALASSO across the four study designs under Bernoulli covariates.
Overall, the qualitative patterns mirror those in Table~\ref{tab:sim_mvn}, although MSE is generally larger under Bernoulli covariates.
Under correct specification (Panel A), the ALASSO achieved the smallest MSE in each design and produced sparse fits with FP counts near zero, while retaining all truly nonzero effects (FN $=0$ for both infectiousness and susceptibility).
Relative to Table~\ref{tab:sim_mvn}, the LASSO performed more poorly, exhibiting larger MSE and selecting additional noise variables, particularly when who-infected-whom was unobserved.
Under misspecification of the internal contact-interval distribution to a Weibull model (Panel B), MSE generally increased relative to Panel A, but the ALASSO remained the most accurate estimator and continued to markedly reduce false positives relative to the LASSO.
Under more severe misspecification to an exponential internal model (Panel C), the MLE exhibited pronounced loss of accuracy, and the ALASSO also showed larger MSE than in Panels A and B; however, false negative counts remained essentially zero, indicating that the true nonzero effects were rarely shrunk to zero.

\begin{table}[h!]
\scriptsize
\centering
\caption{
Coverage probabilities of nominal 95\% Wald-type confidence intervals for the four nonzero coefficients $\{\beta_{\mathrm{sus},1},\beta_{\mathrm{sus},2},\beta_{\mathrm{inf},1},\beta_{\mathrm{inf},2}\}$ under the correctly specified model (internal = log-logistic, external = exponential), where $\beta_{\mathrm{sus}}=(0.7,0.7,0,0,0)$ and $\beta_{\mathrm{inf}}=(0.7,0.7,0,0,0)$.
Panel (1) uses multivariate normal covariates with $\Sigma_{jj}=1$ and $\Sigma_{jk}=0.5$ for $j\neq k$.
Panel (2) uses independent Bernoulli$(0.5)$ covariates.
Results are based on 100 replicates.
For LASSO and ALASSO, coverage is computed conditional on selection ($\widehat{\beta}_{\lambda,k}\neq 0$); in the settings considered in Tables~\ref{tab:sim_mvn} and~\ref{tab:sim_bern}, the four nonzero coefficients were always selected (FN $=0$), so conditional and unconditional coverage coincide.
}
\label{tab:coverage}
\vspace{4pt}
\begin{tabular}{l l c c c c}
\toprule
\multicolumn{6}{c}{\textbf{(1) Multivariate normal covariates}}\\
\midrule
Design & Method & \multicolumn{4}{c}{\textbf{Coverage (95\% CI)}} \\
\cmidrule(lr){3-6}
& & $\beta_{\text{sus},1}$ & $\beta_{\text{sus},2}$ & $\beta_{\text{inf},1}$ & $\beta_{\text{inf},2}$ \\
\midrule
Full Cohort (WIW obs)        & MLE      & 0.95 & 0.98 & 0.94 & 0.93 \\
                             & LASSO    & 0.78 & 0.79 & 0.64 & 0.69 \\
                             & ALASSO & 0.96 & 0.96 & 0.97 & 0.93 \\[2pt]
Full Cohort (WIW unobs)      & MLE      & 0.94 & 0.99 & 0.92 & 0.94 \\
                             & LASSO    & 0.72 & 0.76 & 0.47 & 0.64 \\
                             & ALASSO & 0.96 & 0.98 & 0.95 & 0.94 \\[2pt]
CT (WIW obs)    & MLE      & 0.96 & 0.98 & 0.95 & 0.93 \\
                             & LASSO    & 0.72 & 0.75 & 0.62 & 0.70 \\
                             & ALASSO & 0.96 & 0.97 & 0.97 & 0.92 \\[2pt]
CT (WIW unobs)  & MLE      & 0.95 & 0.97 & 0.94 & 0.97 \\
                             & LASSO    & 0.61 & 0.66 & 0.39 & 0.53 \\
                             & ALASSO & 0.95 & 0.96 & 0.92 & 0.95 \\
\midrule
\multicolumn{6}{c}{\textbf{(2) Bernoulli covariates}}\\
\midrule
Design & Method & \multicolumn{4}{c}{\textbf{Coverage (95\% CI)}} \\
\cmidrule(lr){3-6}
& & $\beta_{\text{sus},1}$ & $\beta_{\text{sus},2}$ & $\beta_{\text{inf},1}$ & $\beta_{\text{inf},2}$ \\
\midrule
Full Cohort (WIW obs)        & MLE      & 0.91 & 0.98 & 0.97 & 0.98 \\
                             & LASSO    & 0.39 & 0.35 & 0.23 & 0.20 \\
                             & ALASSO & 0.90 & 0.96 & 0.97 & 0.98 \\[2pt]
Full Cohort (WIW unobs)      & MLE      & 0.92 & 0.96 & 0.95 & 0.99 \\
                             & LASSO    & 0.65 & 0.69 & 0.55 & 0.54 \\
                             & ALASSO & 0.92 & 0.95 & 0.92 & 0.97 \\[2pt]
CT (WIW obs)    & MLE      & 0.97 & 0.96 & 0.97 & 0.98 \\
                             & LASSO    & 0.21 & 0.22 & 0.15 & 0.14 \\
                             & ALASSO & 0.97 & 0.94 & 0.97 & 0.98 \\[2pt]
CT (WIW unobs)  & MLE      & 0.97 & 0.94 & 0.92 & 0.97 \\
                             & LASSO    & 0.52 & 0.50 & 0.17 & 0.17 \\
                             & ALASSO & 0.98 & 0.93 & 0.91 & 0.97 \\
\bottomrule
\end{tabular}
\end{table}

Table~\ref{tab:coverage} summarizes coverage of nominal 95\% Wald-type confidence intervals for the four nonzero coefficients under the correctly specified model.
For multivariate normal covariates (Panel 1), the MLE achieved near-nominal coverage across designs, ranging from 0.92 to 0.99 across coefficients.
The ALASSO was similarly well calibrated, with coverage typically between 0.92 and 0.98; for example, in the full-cohort design with observed who-infected-whom, ALASSO coverage was 0.96, 0.96, 0.97, and 0.93 for $\beta_{\mathrm{sus},1}$, $\beta_{\mathrm{sus},2}$, $\beta_{\mathrm{inf},1}$, and $\beta_{\mathrm{inf},2}$, respectively.
In contrast, LASSO coverage was substantially below nominal in every design, ranging from 0.39 to 0.79 under normal covariates; the lowest values occurred when who-infected-whom was unobserved under contact tracing, where coverage dropped to 0.61 and 0.66 for the susceptibility effects and to 0.39 and 0.53 for the infectiousness effects.
For Bernoulli covariates (Panel 2), MLE and ALASSO coverage remained close to nominal (MLE 0.91--0.99; ALASSO 0.90--0.98).
In contrast, LASSO exhibited severe undercoverage, most pronounced when WIW was observed (full cohort: 0.39, 0.35, 0.23, 0.20; contact tracing: 0.21, 0.22, 0.15, 0.14).
Under WIW unobserved, LASSO coverage remained well below nominal, ranging from 0.50--0.69 for susceptibility effects and 0.17--0.55 for infectiousness effects.
Overall, ALASSO with the active-set variance approximation achieved near-nominal Wald-type coverage for nonzero effects, whereas Wald-type inference after LASSO selection showed substantial undercoverage.

% ===============================
% Application
% ===============================

\section{Los Angeles County influenza A (H1N1) data}
\label{sec:application}

We illustrate the proposed penalized pairwise AFT framework using household influenza A (H1N1) surveillance data collected by the Los Angeles County Department of Public Health (LACDPH) during April and May 2009.
Data collection followed a standardized LACDPH protocol \citep{sugimoto2011accounting}.
First, between April 14 and May 18, nasopharyngeal swabs and aspirates were obtained from individuals presenting to LACDPH or local providers with acute febrile respiratory illness (AFRI), defined as fever $\ge 37.8^\circ$C plus at least one of cough, sore throat, or rhinorrhea, and specimens were tested for influenza using reverse-transcription polymerase chain reaction (RT-PCR).
Second, patients whose specimens tested positive for pandemic influenza A (H1N1) or influenza A of undetermined subtype were invited to complete a structured telephone interview, which collected household roster information and covariates including sex, age, and antiviral prophylaxis use; for index cases younger than 18 years, an adult proxy respondent was interviewed.
Third, symptom onset dates for AFRI episodes among household members were obtained from the initial interview and, when necessary, follow-up interviews, covering episodes occurring up to 14 days after symptom onset in the index case; all interviews were completed between April 30 and June 1.

For analysis, we assumed that all reported AFRI episodes among household members were attributable to influenza A (H1N1).
The index case in each household was assumed to be externally infected, while other household members were assumed to be at risk of infection from both infectious household contacts and the community, consistent with the contact tracing design with delayed entry considered in our simulation study.
We fixed the incubation period at 2 days, the latent period at 0 days, and the infectious period at 6 days, following assumptions adapted from \citet{yang2009transmissibility}.
Here, the incubation period is the time from infection to symptom onset, whereas the latent period is the time from infection to the onset of infectiousness.
Under these assumptions, a person $j$ with symptom onset on day $t_j^{\mathrm{sym}}$ is assumed to have been infected on day $t_j=t_j^{\mathrm{sym}}-2$ and, in the discrete-time model, to be infectious on days $t_j+1$ through $t_j+6$.
Households were enrolled when an index case presented for clinical evaluation.
We began follow-up for each non-index household member on the index case’s infection day (symptom onset minus 2 days) and ended follow-up at the member’s infection day if infected, or otherwise 14 days after the index case’s infection day.
During follow-up, members were at risk of external infection throughout and of within-household infection only during the infectious days of infected household contacts.

Covariates included sex ($\mathtt{male}$), age category ($\mathtt{adult}=1$ for ages $\ge 18$ years and $0$ otherwise), and antiviral prophylaxis.
Prophylaxis was modeled as a time-dependent covariate and was assumed to begin on the day after symptom onset in the index case.
Each pair included covariate values for the infectious individual (\(\mathtt{male\_inf}\), \(\mathtt{adult\_inf}\), \(\mathtt{proph\_inf}\)) as well as for the susceptible individual (\(\mathtt{male\_sus}\), \(\mathtt{adult\_sus}\), \(\mathtt{proph\_sus}\)).
For external exposure pairs ($i=0$), infectiousness covariates were set to zero.
We considered exponential, Weibull, and log-logistic models for the internal and external contact-interval distributions.

\subsection{Data analysis results}
The LACDPH household data set comprised 299 individuals from 58 households.
A total of 99 probable influenza infections were identified, including 62 index cases; four households had co-primary cases with symptom onset on the same day.
Missing covariate information was present for sex (3 individuals), age (4 individuals), and antiviral prophylaxis (56 individuals), for a total of 62 individuals with at least one missing covariate.
These 62 individuals came from 17 households and accounted for 36 infections, including 19 index cases.
Because we assumed that all household members could infect or be infected by other household members, we excluded an entire household if any member had missing data.
The resulting complete-case data set included 41 households with 63 infections, of which 43 were index cases.
Using this complete-case data set, we fit main-effects models including all six covariates under each of the nine combinations of internal and external contact-interval distributions (exponential, Weibull, and log-logistic for each component), estimating parameters by maximum likelihood \citep{sharker2024pairwise}.
As reported in Table~5 of \citet{sharker2024pairwise}, the lowest AIC was achieved by the model with exponential internal contact intervals and log-logistic external contact intervals (AIC $=207.66$).

We fit the exponential internal and log-logistic external transmission model using the MLE and the penalized estimators (LASSO and ALASSO), selecting $\lambda$ by BIC over 100 values in $[0.01,1]$ (Table~\ref{tab:mle_lasso_alasso}).
Consistent with the simulation study, ALASSO produced the sparsest fit, retaining the age effects on infectiousness and susceptibility and the prophylaxis effect on susceptibility, while shrinking the sex effects and the prophylaxis effect on infectiousness to zero.
Across estimators, the age effects were directionally consistent with higher infectiousness and lower susceptibility in adults than in children, although uncertainty remained substantial.
The prophylaxis effect on susceptibility was more clearly supported by the data: under ALASSO, the corresponding rate ratio was $\exp(-0.950)=0.39$ (approximate 95\% CI $0.17$ to $0.87$), suggesting reduced infection risk among susceptibles receiving antiviral prophylaxis.
In contrast, the unpenalized MLE yielded very imprecise estimates for some effects, most notably prophylaxis on infectiousness, whereas the penalized fits improved numerical stability by shrinking weakly supported effects toward zero.

\begin{table}[h!]
\centering
\scriptsize
\caption{
MLE, LASSO, and adaptive LASSO estimates from the transmission model with exponential internal and log-logistic external contact intervals.
Coefficients shrunk to zero by the LASSO or adaptive LASSO penalty are shown as ``--''.
Tuning parameters were selected by BIC over 100 values in $[0.01,1]$, yielding $\lambda$ for LASSO = 0.57 and $\lambda$ for adaptive LASSO = 0.48 with $\eta=1.75$.
Standard errors (SE) are reported for MLE and for the active set in the penalized fits.
}
\vspace{1em}
\label{tab:mle_lasso_alasso}
\begin{tabular}{lrrrrrr}
\toprule
\textbf{Parameter}
& \textbf{MLE}
& \textbf{(SE)}
& \textbf{LASSO}
& \textbf{(SE)}
& \textbf{ALASSO}
& \textbf{(SE)} \\
\midrule
Intercept
& $-4.758$ & $(1.507)$
& $-4.469$ & $(0.610)$
& $-4.734$ & $(0.720)$ \\

Inf: Male
& $-0.693$ & $(1.299)$
& --       & --
& --       & -- \\

Sus: Male
& $-0.327$ & $(0.334)$
& $-0.235$ & $(0.254)$
& --       & -- \\

Inf: Adult
& $1.677$  & $(1.563)$
& $0.825$  & $(0.569)$
& $1.034$  & $(0.733)$ \\

Sus: Adult
& $-0.465$ & $(0.334)$
& $-0.422$ & $(0.281)$
& $-0.389$ & $(0.260)$ \\

Inf: Prophylaxis
& $-4.206$ & $(30.044)$
& --       & --
& --       & -- \\

Sus: Prophylaxis
& $-1.023$ & $(0.489)$
& $-0.912$ & $(0.393)$
& $-0.950$ & $(0.416)$ \\

External rate (log)
& $-3.925$ & $(0.335)$
& $-4.029$ & $(0.346)$
& $-4.152$ & $(0.365)$ \\

External shape (log)
& $0.854$  & $(0.403)$
& $0.870$  & $(0.409)$
& $0.843$  & $(0.437)$ \\
\bottomrule
\end{tabular}
\end{table}

\begin{table}[h!]
\centering
\small
\caption{
Predicted household secondary attack risks with 95\% Wald confidence intervals from ALASSO estimates under the selected model (exponential internal, log-logistic external contact intervals).
}
\vspace{1em}
\label{tab:sar}
\begin{tabular}{llrc}
\toprule
\textbf{From} & \textbf{To} & \textbf{Estimate} & \textbf{95\% confidence interval} \\
\midrule
\multirow{2}{*}{Child}
& Child untreated
& 5.1\% & (1.3\%, 19.5\%) \\
& Child on prophylaxis
& 2.0\% & (0.5\%, 8.6\%) \\
\midrule
\multirow{2}{*}{Child}
& Adult untreated
& 3.5\% & (0.9\%, 12.9\%) \\
& Adult on prophylaxis
& 1.4\% & (0.3\%, 5.7\%) \\
\midrule
\multirow{2}{*}{Adult}
& Child untreated
& 13.8\% & (5.9\%, 30.6\%) \\
& Child on prophylaxis
& 5.6\% & (2.1\%, 14.3\%) \\
\midrule
\multirow{2}{*}{Adult}
& Adult untreated
& 9.6\% & (4.2\%, 21.1\%) \\
& Adult on prophylaxis
& 3.8\% & (1.4\%, 9.9\%) \\
\bottomrule
\end{tabular}
\end{table}

Table~\ref{tab:sar} reports predicted household secondary attack risks (SAR) stratified by the age of the infectious individual, the age of the susceptible individual, and antiviral prophylaxis in the susceptible individual, using the ALASSO coefficient estimates.
Predicted risks are higher when the infectious individual is an adult and lower when the susceptible individual is an adult, consistent with higher infectiousness and lower susceptibility among adults in this data set.
Across strata, antiviral prophylaxis in the susceptible individual is associated with a marked reduction in predicted risk; for example, when the infectious individual is an adult and the susceptible individual is a child, the predicted SAR decreases from 13.8\% without prophylaxis to 5.6\% with prophylaxis.
We report 95\% Wald confidence intervals computed from the ALASSO coefficient estimates and the corresponding active set covariance matrix.

\section{Discussion}
We proposed a penalized pairwise accelerated failure time (AFT) model for contact intervals that jointly represents within-population transmission and infection from external sources.
A coordinate descent algorithm enables scalable estimation, and simulations show that the adaptive LASSO yields sparse fits with improved estimation accuracy and near-nominal Wald-type coverage for selected nonzero coefficients.
In the Los Angeles County influenza A (H1N1) application, the adaptive LASSO produced an interpretable model and improved precision for selected effects, including antiviral prophylaxis on susceptibility.

Two important directions remain.
First, our oracle property result is established only when who infected whom is observed.
When who infected whom is not observed, the likelihood is based on aggregated hazards and no longer decomposes additively over ordered pairs, and we have not yet proved an oracle property in this setting, although simulations suggest good performance.
Second, our current approach is fully parametric, so misspecification of the contact interval distribution can compromise accuracy.
While semiparametric AFT methods \citep{buckley1979linear, wei1990linear, jin2003rank, jin2006least} are well developed, analogous pairwise survival models that leave the baseline contact interval distribution unspecified remain largely unexplored.
Such extensions could improve robustness to baseline misspecification while preserving interpretable covariate effects, and developing penalized estimation in this setting is an important direction for future work.

\section*{Funding}
EK and JL were supported by National Institute of Allergy and Infectious Diseases (NIAID) grant R01 AI184507.
The content is solely the responsibility of the authors and does not represent the official views or policies of NIAID or the National Institutes of Health.

\section*{Acknowledgements}
We are grateful to the Los Angeles County Department of Public Health for allowing the use of their data.

\section*{Conflicts of Interest}
No potential conflict of interest was reported by the author(s).

\section*{Supplementary Material}
The file \texttt{Supplementary.zip} contains all code and data used in this paper, including \texttt{README.txt}.
The \texttt{README.txt} file provides an overview of the directory structure and step-by-step instructions for running the R scripts to reproduce the simulation study and the real-data analyses.

% ===============================
% References
% ===============================
\bibliographystyle{chicago}
\bibliography{ref}
% ===============================
% Appendix
% ===============================

\appendix

\section*{Appendix}

\subsection*{Coordinate Descent Algorithm}
We outline the derivation of the update rule in Equation~\eqref{eq:update}.
Fix $k\in\{1,\ldots,p\}$ and hold all other parameters $\{\beta_j:j\neq k\}$ and $\bm\psi$ fixed at their current values.
Let
\[
f_k(z)
=
-\ell(\beta_1,\ldots,\beta_{k-1},z,\beta_{k+1},\ldots,\beta_{p};\bm\psi)
\]
denote the one-dimensional slice of the \emph{unpenalized} negative log likelihood, and consider the local penalized objective
\[
Q_k(\beta_k+\Delta)
=
f_k(\beta_k+\Delta)
+\lambda w_k\lvert \beta_k+\Delta\rvert .
\]
A second-order expansion of $f_k$ around the current iterate yields
\[
f_k(\beta_k+\Delta)
\approx
f_k(\beta_k)
+ g_k\,\Delta
+ \tfrac12 H_{kk}\,\Delta^2,
\]
where
\[
g_k
=
\left.\frac{\partial}{\partial z}\, f_k(z)\right|_{z=\beta_k},
\qquad
H_{kk}
=
\left.\frac{\partial^2}{\partial z^2}\, f_k(z)\right|_{z=\beta_k},
\qquad
H_{kk}>0.
\]
Ignoring constants that do not depend on $\Delta$, we minimize
\[
\phi(\Delta)
=
\tfrac12 H_{kk}\Delta^2 + g_k\Delta + \lambda w_k\lvert \beta_k+\Delta\rvert .
\]
Since $H_{kk}>0$, $\phi(\Delta)$ is convex, and the necessary and sufficient optimality condition is
\[
0 \in \partial \phi(\Delta)
=
H_{kk}\Delta + g_k + \lambda w_k\,\partial\lvert \beta_k+\Delta\rvert ,
\]
where $\partial|\cdot|$ denotes the subgradient of the absolute-value function:
\[
\partial|x|=
\begin{cases}
\{+1\}, & x>0,\\[3pt]
[-1,+1], & x=0,\\[3pt]
\{-1\}, & x<0.
\end{cases}
\]

\noindent
\textbf{Case 1:} $\beta_k+\Delta>0$.
Then $\partial|\beta_k+\Delta|=\{+1\}$ and
\[
0 = H_{kk}\Delta + g_k + \lambda w_k
\quad\Rightarrow\quad
\beta_k^{\mathrm{new}}=\beta_k-\frac{g_k+\lambda w_k}{H_{kk}},
\]
valid if $\beta_k-\frac{g_k}{H_{kk}}>\frac{\lambda w_k}{H_{kk}}$.

\noindent
\textbf{Case 2:} $\beta_k+\Delta<0$.
Then $\partial|\beta_k+\Delta|=\{-1\}$ and
\[
0 = H_{kk}\Delta + g_k - \lambda w_k
\quad\Rightarrow\quad
\beta_k^{\mathrm{new}}=\beta_k-\frac{g_k-\lambda w_k}{H_{kk}},
\]
valid if $\beta_k-\frac{g_k}{H_{kk}}<-\frac{\lambda w_k}{H_{kk}}$.

\noindent
\textbf{Case 3:} $\beta_k+\Delta=0$.
Then $\partial|\beta_k+\Delta|=[-1,+1]$ and the optimality condition
\[
0 \in -H_{kk}\beta_k + g_k + \lambda w_k[-1,+1]
\]
holds when
\[
\left|\beta_k-\frac{g_k}{H_{kk}}\right|
\le
\frac{\lambda w_k}{H_{kk}}.
\]
Combining the three cases yields the soft-thresholding update in Equation~\eqref{eq:update}.

\subsection*{Oracle property}

We use the notation and assumptions in Section~\ref{sec:asymptotic}.
The proof closely follows Appendix~1 of \citet{zhang2007adaptive}, with the profile score and information normalized by $m$.
The local root-$m$ reparameterization follows the standard LASSO asymptotic argument of \citet{knight2000asymptotics}.
The rate calculation for the general weights $|\widetilde\beta_{m,k}|^{-\eta}$ follows the same argument as in \citet{zou2006adaptive}.

\subsubsection*{Proof of Theorem~\ref{thm:rootm}}
\begin{proof}
Consider the $C$-ball
\[
\mathcal B_m(C)
=
\left\{
\bm\beta:
\bm\beta=\bm\beta_0+m^{-1/2}\bm u,\
\|\bm u\|\le C
\right\},
\]
and denote its boundary by $\partial\mathcal B_m(C)$.
It is sufficient to show that, for every $\epsilon>0$, there exists a sufficiently large constant $C$ such that
\begin{equation}
\Pr\left\{
\sup_{\bm\beta\in\partial\mathcal B_m(C)}
\mathcal Q_m(\bm\beta)
<
\mathcal Q_m(\bm\beta_0)
\right\}>1-\epsilon
\label{eq:rootm_boundary}
\end{equation}
for all sufficiently large $m$.

By \eqref{eq:profile_score} and \eqref{eq:profile_information},
\[
\bm S_m=O_p(1),
\qquad
-\frac{1}{m}
\ddot\ell_{\text{profile},m}(\bm\beta)
=
\bm\Sigma+o_p(1)
\]
uniformly for $\bm\beta\in\mathcal B_m(C)$.
Hence, for $\bm\beta=\bm\beta_0+m^{-1/2}\bm u$, a second-order Taylor expansion gives
\begin{equation}
\ell_{\text{profile},m}
\left(\bm\beta_0+m^{-1/2}\bm u\right)
-
\ell_{\text{profile},m}(\bm\beta_0)
=
\bm u^\top\bm S_m
-
\frac12
\bm u^\top\{\bm\Sigma+o_p(1)\}\bm u
\label{eq:profile_local_expansion}
\end{equation}
uniformly for $\|\bm u\|\le C$.

For $k\in\mathcal S_0$, root-$m$ consistency of the initial estimator implies
\[
w_{m,k}
=
|\widetilde\beta_{m,k}|^{-\eta}
=
|\beta_{0,k}|^{-\eta}+o_p(1)
=
O_p(1).
\]
For $\|\bm u\|=C$, the reverse triangle inequality therefore yields
\begin{align}
&\mathcal Q_m
\left(\bm\beta_0+m^{-1/2}\bm u\right)
-
\mathcal Q_m(\bm\beta_0)
\notag\\
&=
\ell_{\text{profile},m}
\left(\bm\beta_0+m^{-1/2}\bm u\right)
-
\ell_{\text{profile},m}(\bm\beta_0)
\notag\\
&\quad
-\lambda_m
\sum_{k\in\mathcal S_0}w_{m,k}
\left(
\left|\beta_{0,k}+m^{-1/2}u_k\right|
-
|\beta_{0,k}|
\right)
-
\lambda_m m^{-1/2}
\sum_{k\notin\mathcal S_0}w_{m,k}|u_k|
\notag\\
&\le
\ell_{\text{profile},m}
\left(\bm\beta_0+m^{-1/2}\bm u\right)
-
\ell_{\text{profile},m}(\bm\beta_0)
+
\lambda_m
\sum_{k\in\mathcal S_0}w_{m,k}
\left|
\left|\beta_{0,k}+m^{-1/2}u_k\right|
-
|\beta_{0,k}|
\right|
\notag\\
&\le
\ell_{\text{profile},m}
\left(\bm\beta_0+m^{-1/2}\bm u\right)
-
\ell_{\text{profile},m}(\bm\beta_0)
+
\lambda_m m^{-1/2}
\sum_{k\in\mathcal S_0}w_{m,k}|u_k|
\notag\\
&=
-\frac12
\bm u^\top\{\bm\Sigma+o_p(1)\}\bm u
+
\bm u^\top\bm S_m
+
\lambda_m m^{-1/2}
\sum_{k\in\mathcal S_0}w_{m,k}|u_k|
\notag\\
&\le
-\kappa C^2
+
C\,O_p(1)
+
C\frac{\lambda_m}{\sqrt m}O_p(1),
\label{eq:rootm_inequality_chain}
\end{align}
where $\kappa>0$ is a fixed constant because $\bm\Sigma$ is positive definite.
Under $\lambda_m/\sqrt m=O(1)$, the negative quadratic term is of order $-C^2$, whereas the other two terms are of order $C\,O_p(1)$.
Thus, by choosing $C$ sufficiently large, \eqref{eq:rootm_boundary} holds.

Since $\bm\beta_0$ is an interior point, $\mathcal B_m(C)$ is contained in the parameter space for all sufficiently large $m$.
Since $\mathcal Q_m$ is continuous, it attains a maximum over the compact set $\mathcal B_m(C)$.
By \eqref{eq:rootm_boundary}, a maximizer over the ball cannot lie on its boundary.
It is therefore a local maximizer and satisfies
\[
\left\|
\widehat{\bm\beta}_{\lambda,m}-\bm\beta_0
\right\|
=
O_p(m^{-1/2}).
\]
This proves Theorem~\ref{thm:rootm}.
\end{proof}

\subsubsection*{Proof of Theorem~\ref{thm:oracle}}
\begin{proof}
\noindent\textit{(i) Sparsity.}
Fix $k\notin\mathcal S_0$.
For $\bm\beta=\bm\beta_0+m^{-1/2}\bm u$ with $\|\bm u\|\le C$, Taylor expansion and \eqref{eq:profile_score}--\eqref{eq:profile_information} give
\[
\frac{1}{\sqrt m}
\dot\ell_{\text{profile},m,k}(\bm\beta)
=
S_{m,k}
-
\{\bm\Sigma_{k\cdot}+o_p(1)\}\bm u
=
O_p(1)
\]
uniformly over this neighborhood, where $\bm\Sigma_{k\cdot}$ denotes the $k$th row of $\bm\Sigma$.
Since $\beta_{0,k}=0$ and the initial estimator is root-$m$ consistent,
\[
Z_{m,k}
=
\sqrt m\,\widetilde\beta_{m,k}
=
O_p(1),
\]
and hence
\[
\frac{\lambda_mw_{m,k}}{\sqrt m}
=
\lambda_m m^{(\eta-1)/2}|Z_{m,k}|^{-\eta}
\to_p\infty.
\]
For $\beta_k\neq0$, it follows that
\begin{align}
\frac{1}{\sqrt m}
\frac{\partial\mathcal Q_m(\bm\beta)}{\partial\beta_k}
&=
\frac{1}{\sqrt m}
\dot\ell_{\text{profile},m,k}(\bm\beta)
-
\frac{\lambda_mw_{m,k}}{\sqrt m}
\operatorname{sign}(\beta_k)
\notag\\
&=
O_p(1)
-
\left\{
\lambda_m m^{(\eta-1)/2}
|Z_{m,k}|^{-\eta}
\right\}
\operatorname{sign}(\beta_k).
\label{eq:inactive_derivative_sign}
\end{align}
Because the quantity in braces diverges in probability, the sign of \eqref{eq:inactive_derivative_sign} is opposite to the sign of $\beta_k$ with probability tending to one.
Therefore a root-$m$ consistent local maximizer must satisfy
\[
\widehat\beta_{\lambda,m,k}=0.
\]
Since $p$ is fixed, this holds simultaneously for every $k\notin\mathcal S_0$, and therefore
\[
\Pr\left(
\widehat{\bm\beta}_{\lambda,m,\mathcal S_0^c}=\bm 0
\right)\to1.
\]

\noindent\textit{(ii) Asymptotic normality.}
For $k\in\mathcal S_0$, root-$m$ consistency of the initial estimator gives
\[
w_{m,k}
=
|\beta_{0,k}|^{-\eta}+o_p(1)
=
O_p(1).
\]
Root-$m$ consistency of $\widehat{\bm\beta}_{\lambda,m}$ and $\beta_{0,k}\neq0$ also imply
\[
\operatorname{sign}
(\widehat\beta_{\lambda,m,k})
=
\operatorname{sign}(\beta_{0,k})
\]
with probability tending to one.
Define the active penalty-gradient vector
\[
\bm b_m
=
\left(
w_{m,k}\operatorname{sign}(\beta_{0,k})
:
k\in\mathcal S_0
\right)^\top.
\]
Because $p$ is fixed and $w_{m,k}=O_p(1)$ for every $k\in\mathcal S_0$, we have $\bm b_m=O_p(1)$.

By part (i), with probability tending to one,
\[
\widehat{\bm\beta}_{\lambda,m,\mathcal S_0^c}
-
\bm\beta_{0,\mathcal S_0^c}
=
\bm 0.
\]
Together with sign consistency on $\mathcal S_0$, the active first-order conditions can therefore be written as
\[
\bm 0
=
\dot\ell_{\text{profile},m,\mathcal S_0}
(\widehat{\bm\beta}_{\lambda,m})
-
\lambda_m\bm b_m.
\]
Expanding the active profile score about $\bm\beta_0$ gives
\[
\bm 0
=
\sqrt m\,\bm S_{m,\mathcal S_0}
-
m\{\bm\Sigma_{\mathcal S_0}+o_p(1)\}
\left(
\widehat{\bm\beta}_{\lambda,m,\mathcal S_0}
-
\bm\beta_{0,\mathcal S_0}
\right)
-
\lambda_m\bm b_m.
\]
The cross-Hessian term does not appear because its displacement factor $\widehat{\bm\beta}_{\lambda,m,\mathcal S_0^c}-\bm\beta_{0,\mathcal S_0^c}$ is zero by part (i).
Consequently,
\[
\sqrt m
\left(
\widehat{\bm\beta}_{\lambda,m,\mathcal S_0}
-
\bm\beta_{0,\mathcal S_0}
\right)
=
\{\bm\Sigma_{\mathcal S_0}+o_p(1)\}^{-1}
\left\{
\bm S_{m,\mathcal S_0}
-
\frac{\lambda_m}{\sqrt m}\bm b_m
\right\}.
\]
Since $\lambda_m/\sqrt m\to0$ and $\bm b_m=O_p(1)$, the penalty term is $o_p(1)$.
Thus, by \eqref{eq:profile_score} and Slutsky's theorem,
\[
\sqrt m
\left(
\widehat{\bm\beta}_{\lambda,m,\mathcal S_0}
-
\bm\beta_{0,\mathcal S_0}
\right)
\to_d
N\left(\bm 0,\bm\Sigma_{\mathcal S_0}^{-1}\right).
\]
This proves Theorem~\ref{thm:oracle}.
\end{proof}

\subsection*{Approximate variance}
\label{app:lqa_cov}
We provide the derivation of the covariance estimator in Section~\ref{sec:alasso_var} under the local quadratic approximation (LQA) of the adaptive LASSO penalty \citep{fan2001variable,zou2006adaptive,zhang2007adaptive}.

Using the LQA and the block-diagonal matrix $\bm A_{(r)}$ defined in Section~\ref{sec:alasso_var}, the penalty is approximated by $\lambda\bm\theta^\top\bm A_{(r)}\bm\theta/2$.
Accordingly, define the surrogate objective
\[
\widetilde Q_{(r)}(\bm\theta)
=
\ell(\bm\theta)
-
\frac{\lambda}{2}\bm\theta^\top \bm A_{(r)}\bm\theta.
\]
Using the score $\bm U$ and observed information $\bm H$ defined in Section~\ref{sec:alasso_var}, the gradient and Hessian of $\widetilde Q_{(r)}$ are
\[
\nabla \widetilde Q_{(r)}(\bm\theta)
=
\bm U(\bm\theta)-\lambda \bm A_{(r)}\bm\theta,
\qquad
\nabla^2 \widetilde Q_{(r)}(\bm\theta)
=
-\bm H(\bm\theta)-\lambda \bm A_{(r)}.
\]
Applying one Newton step to maximize $\widetilde Q_{(r)}$ gives
\begin{align*}
\widehat{\bm\theta}^{(r+1)}
&=
\widehat{\bm\theta}^{(r)}
-
\left\{\nabla^2 \widetilde Q_{(r)}\big(\widehat{\bm\theta}^{(r)}\big)\right\}^{-1}
\nabla \widetilde Q_{(r)}\big(\widehat{\bm\theta}^{(r)}\big)
\\
&=
\widehat{\bm\theta}^{(r)}
+
\left\{\bm H\big(\widehat{\bm\theta}^{(r)}\big)+\lambda \bm A_{(r)}\right\}^{-1}
\left\{\bm U\big(\widehat{\bm\theta}^{(r)}\big)-\lambda \bm A_{(r)}\widehat{\bm\theta}^{(r)}\right\},
\end{align*}
since $\{-\bm H-\lambda\bm A\}^{-1}=-(\bm H+\lambda\bm A)^{-1}$.
Multiplying both sides by $\bm H\big(\widehat{\bm\theta}^{(r)}\big)+\lambda \bm A_{(r)}$ yields
\[
\left\{\bm H\big(\widehat{\bm\theta}^{(r)}\big)+\lambda \bm A_{(r)}\right\}
\left(\widehat{\bm\theta}^{(r+1)}-\widehat{\bm\theta}^{(r)}\right)
=
\bm U\big(\widehat{\bm\theta}^{(r)}\big)-\lambda \bm A_{(r)}\widehat{\bm\theta}^{(r)}.
\]
At convergence, write $\widehat{\bm\theta}_\lambda$ and let $\bm A_\lambda$ denote $\bm A_{(r)}$ evaluated at the final iterate.
Under the LQA, $\widehat{\bm\theta}_\lambda$ approximately solves the penalized score equation
\[
\bm g(\bm\theta)
=
\bm U(\bm\theta)-\lambda \bm A_\lambda \bm\theta
=
\bm 0.
\]
Let $\bm\theta_0$ denote the target parameter.
A first-order Taylor expansion gives
\[
\bm 0
=
\bm g(\widehat{\bm\theta}_\lambda)
\approx
\bm g(\bm\theta_0)
+
\bm G(\bm\theta_0)\big(\widehat{\bm\theta}_\lambda-\bm\theta_0\big),
\]
where
\[
\bm G(\bm\theta)
=
\frac{\partial \bm g(\bm\theta)}{\partial\bm\theta^\top}
=
\frac{\partial \bm U(\bm\theta)}{\partial\bm\theta^\top}-\lambda \bm A_\lambda
=
\nabla^2 \ell(\bm\theta)-\lambda \bm A_\lambda
=
-\bm H(\bm\theta)-\lambda \bm A_\lambda.
\]
Hence
\[
\widehat{\bm\theta}_\lambda-\bm\theta_0
\approx
\left\{\bm H(\bm\theta_0)+\lambda \bm A_\lambda\right\}^{-1}
\bm g(\bm\theta_0).
\]
Taking covariance on both sides yields the sandwich form
\begin{equation}
\mathrm{cov}(\widehat{\bm\theta}_\lambda)
\approx
\left\{\bm H(\bm\theta_0)+\lambda \bm A_\lambda\right\}^{-1}
\,
\mathrm{cov}\left\{\bm g(\bm\theta_0)\right\}
\,
\left\{\bm H(\bm\theta_0)+\lambda \bm A_\lambda\right\}^{-1}.
\label{eq:app_sandwich_g_theta}
\end{equation}
We plug in $\mathrm{cov}\{\bm g(\bm\theta_0)\}=\mathrm{cov}\{\bm U(\bm\theta_0)\}$ and use the standard likelihood identity $\mathrm{cov}\{\bm U(\bm\theta_0)\}=\E\{\bm H(\bm\theta_0)\}$ under correct specification.
With the plug-in substitutions $\bm\theta_0\mapsto \widehat{\bm\theta}_\lambda$ and $\E\{\bm H(\bm\theta_0)\}\mapsto \bm H(\widehat{\bm\theta}_\lambda)$, we obtain the covariance estimator in \eqref{eq:alasso_cov_theta}.
Restricting that estimator to $\mathcal A=\mathcal U\cup\mathcal S$ gives the active-set covariance estimator described in Section~\ref{sec:alasso_var}.

\end{document}